\documentclass[a4paper,12pt]{article}
\usepackage{amsmath,amssymb,amsthm}

\newtheorem{lemma}{Lemma}
\newtheorem{theorem}{Theorem}

\theoremstyle{definition}

\theoremstyle{remark}

\newcommand{\beq}{\begin{eqnarray}}
\newcommand{\eeq}{\end{eqnarray}}
\newcommand{\beqnn}{\begin{eqnarray*}}
\newcommand{\eeqnn}{\end{eqnarray*}}
\newcommand{\rd}{\partial}

\newcommand{\Tr}{\operatorname{Tr}}
\newcommand{\diag}{\operatorname{diag}}
\newcommand{\tp}[1]{\:{}^{\mathrm{t}}#1}
\newcommand{\CC}{\mathbf{C}}
\newcommand{\PP}{\mathbf{P}}

\newcommand{\ZZ}{\mathbf{Z}}
\newcommand{\bst}{\boldsymbol{t}}
\newcommand{\bsx}{\boldsymbol{x}}

\newcommand{\calB}{\mathcal{B}}
\newcommand{\calF}{\mathcal{F}}
\newcommand{\calH}{\mathcal{H}}
\newcommand{\calL}{\mathcal{L}}
\newcommand{\calM}{\mathcal{M}}
\newcommand{\calP}{\mathcal{P}}
\newcommand{\calQ}{\mathcal{Q}}
\newcommand{\gl}{\mathrm{gl}}
\newcommand{\GL}{\mathrm{GL}}
\newcommand{\res}{\mathop{\mathrm{res}}}

\begin{document}

\title{Generalized string equations for\\
double Hurwitz numbers}
\author{Kanehisa Takasaki%
\thanks{E-mail: takasaki@math.h.kyoto-u.ac.jp}\\
{\small
Graduate School of Human and Environmental Studies,
Kyoto University}\\
{\small Yoshida, Sakyo, Kyoto, 606-8501, Japan}}
\date{}
\maketitle

\begin{abstract}
The generating function of double Hurwitz numbers 
is known to become a tau function of the Toda hierarchy. 
The associated Lax and Orlov-Schulman operators turn out 
to satisfy a set of generalized string equations.  
These generalized string equations resemble those of 
$c = 1$ string theory except that the Orlov-Schulman operators 
are contained therein in an exponentiated form.  
These equations are derived from a set of intertwining 
relations for fermion bilinears in a two-dimensional 
free fermion system.  The intertwiner is constructed 
from a fermionic counterpart of the cut-and-join operator.  
A classical limit of these generalized string equations 
is also obtained.  The so called Lambert curve emerges 
in a specialization of its solution.  This seems 
to be another way to derive the spectral curve of 
the random matrix approach to Hurwitz numbers.  
\end{abstract}
\bigskip

\begin{flushleft}
2000 Mathematics Subject Classification: 35Q58, 14N10, 81R12\\
Key words: Hurwitz numbers, Toda hierarchy, 
generalized string equation, quantum torus algebra, 
cut-and-join operator, Lambert curve
\end{flushleft}

\section{Introduction}

Hurwitz numbers count the topological types of 
finite ramified coverings of a given Riemann surface 
\cite{Hurwitz1891}.  Some ten years ago, 
Hurwitz numbers of coverings of the Riemann sphere $\CC\PP^1$ 
turned out to be related to Hodge integrals on 
the Deligne-Mumford moduli space $\bar{\calM}_{g,n}$ 
of marked stable curves \cite{FP99,GJV99,ELSV00}
and Gromov-Witten invariants of $\CC\PP^1$ 
\cite{Pandharipande00,Okounkov00,OP01,OP02}.  
These observations led to various developments 
and applications, such as new proofs \cite{KL05,KL06,CLL06} 
of Witten's conjecture \cite{Witten91} 
(Kontsevich's theorem \cite{Kontsevich92}) 
on two-dimensional topological gravity, in other words, 
integrals of $\psi$-classes on $\bar{\calM}_{g,n}$.  
Recently, a set of new recursion relations 
for Hurwitz numbers were derived \cite{BM09,BEMS09,EMS09}
as an analogue of Eynard and Orantin's 
``topological recursion relations'' \cite{EO07}.  

These developments in the last decade are also 
more or less connected with integrable hierarchies 
of the KdV, KP and Toda type (see reviews 
by the Kyoto group \cite{DJKM83,MJD-book,TT95} 
for basic knowledge on those integrable systems).  
Firstly, several Toda-like equations show up 
in relation with Gromov-Witten invariants of $\CC\PP^1$.  
Secondly, the Witten conjecture is formulated 
in the language of the KdV hierarchy and 
its Virasoro symmetries.  Subsequent studies 
\cite{GJ08,MM08,Kazarian08} 
on the Witten conjecture from the Hurwitz side 
revealed a connection with the KP hierarchy as well.  

A master integrable system in this sense 
is the (two-dimensional) Toda hierarchy \cite{UT84}.  
As pointed out by Okounkov \cite{Okounkov00}, 
a generating function of ``double Hurwitz numbers'' 
gives a special solution of the Toda hierarchy.  
Being a tau function of the Toda hierarchy, 
this generating function has two sets of 
independent variables $\bst = (t_1,t_2,\cdots)$ 
and $\bar{\bst} = (\bar{t}_1,\bar{t}_2,\cdots)$. 
By specializing part of these variables to 
particular values, the generating function 
of ``simple Hurwitz numbers'' are recovered.  
The aforementioned Toda-like equations and KP hierarchy 
may be thought of as equations satisfied 
by those specialized tau functions.  

The goal of this paper is to derive 
the {\it generalized string equations}
\beqnn
  L = Qe^{-\beta/2}\bar{L}e^{\beta\bar{M}},\quad 
  \bar{L}^{-1} = Qe^{\beta/2}L^{-1}e^{\beta M}
\eeqnn
for this special solution of the Toda hierarchy, 
and to examine implications thereof.  
Here $\beta$ and $Q$ are parameters 
prepared along with $\bst$ and $\bar{\bst}$. 
$L,\bar{L}$ and $M,\bar{M}$ are, respectively, 
the Lax and Orlov-Schulman operators, all being 
one-dimensional difference operators. 
Compared with the cases of two-dimensional quantum gravity 
\cite{AvM92} and $c = 1$ string theory 
\cite{DMP93,HOP94,EK94,Takasaki95,NTT95,Takasaki96}, 
these generalized string equations have a notable new feature. 
Namely, they contain the exponentials $e^{\beta M}$ 
and $e^{\beta\bar{M}}$ of the Orlov-Schulman operators.  
This is in sharp contrast with the conventional 
string equations that are linear in $M$ and $\bar{M}$ 
(except for the the case of a deformed version 
of $c = 1$ string theory \cite{NTT95}, in which 
the string equations are quadratic in $M$ and $\bar{M}$).  

The exponentials of $M$ and $\bar{M}$ are related 
to the ``quantum torus algebra'' in a two-dimensional 
free fermion system.  The same Lie algebraic structure  
plays an important role in the melting crystal model 
of five-dimensional gauge theory as well \cite{NT07,NT08}.  
We borrow some technical ideas developed therein. 
The generalized string equations are derived 
from algebraic relations (referred to as 
``intertwining relations'') among fermion bilinears.  
A clue therein is a special fermion bilinear that corresponds 
to the so called ``cut-and-join operator'' \cite{GJ97}.  

The generalized string equations also have 
a {\it classical limit} of the form 
\beqnn
  \calL = Q\bar{\calL}e^{\beta\bar{\calM}},\quad
  \bar{\calL}^{-1} = Q\calL^{-1}e^{\beta\calM}, 
\eeqnn
where $\calL,\calM,\bar{\calL},\bar{\calM}$ are 
``long-wave'' or ``dispersionless'' limits \cite{TT95} 
of the Lax and Orlov-Schulman operators.  
Conceptually, the classical limit amounts to 
the genus-zero part of Hurwitz numbers. 
Similarity with $c = 1$ string theory becomes 
even more obvious in this limit (which amounts 
to the genus-0 part of string amplitudes). 
The generalized string equations of $c = 1$ string theory 
\cite{DMP93,HOP94,EK94,Takasaki95} 
have a classical limit of the form 
\beqnn 
  \calL = \bar{\calL}\bar{\calM},\quad 
  \bar{\calL}^{-1} = \calL^{-1}\calM. 
\eeqnn
Thus, roughly speaking, the linear terms $\calM,\bar{\calM}$ 
therein are now replaced by the exponential terms 
$e^{\beta\calM},e^{\beta\bar{\calM}}$. 
Thanks to this similarity, one can apply the method 
developed for solving the generalized string equations 
of $c = 1$ string theory \cite{Takasaki95} 
to construct a solution of the foregoing equations 
in the form of power series of $t_k$'s and $\bar{t}_k$'s.  
This construction simplifies when either $t_k$'s 
or $\bar{t}_k$'s are specialized to particular values 
for which the tau function reduces to the generating function 
of simple Hurwitz numbers.  Remarkably, 
we encounter therein the equation 
\beqnn
  x = ye^y
\eeqnn
of the ``Lambert curve'' that lies in the heart 
of the new recursion relations \cite{BM09,BEMS09,EMS09}.  
This should not be a coincidence.  Presumably, 
the classical limit of the generalized string equations 
will be another way to derive the ``spectral curve'' 
in the random matrix approach \cite{BEMS09,MS09a,MS09b}. 

This paper is organized as follows.  Section 2 reviews 
the notion of Hurwitz numbers and their generating functions.  
Schur functions and their special values are used 
to interpret the generating functions as tau functions. 
Section 3 prepares technical tools from a two-dimensional 
free fermion system.  We introduce relevant fermion bilinears, 
and recall Okounkov's result on a fermionic representation 
of the generating function of double Hurwitz numbers \cite{Okounkov00}. 
Section 4 presents the generalized string equations. 
We start from intertwining relations of fermion bilinears, 
and translate those relations to the generalized string equations.  
Section 5 formulates the classical limit of the generalized 
string equations.   The classical limit is first derived 
by heuristic consideration, then justified by showing 
that the associated tau function has an $\hbar$-expansion.  
Section 6 is devoted to solving this classical limit 
of the generalized string equations.  The solution 
is constructed in much the same way as that of 
the generalized string equations for $c = 1$ string theory.  
The Lambert curve shows up in a specialization of this solution.

\section{Generating functions of Hurwitz numbers}

In this section, we use various notions and formulae 
on partitions, Young diagrams and Schur functions. 
They are mostly borrowed from Macdonald's book 
\cite{Macdonald-book}.

\subsection{Hurwitz numbers of Riemann sphere}

Let us consider finite ramified coverings of $\CC\PP^1$.  
Two coverings $\pi: \Gamma \to \CC\PP^1$ and 
$\pi': \Gamma' \to \CC\PP^1$ are said to be 
topologically equivalent if $\pi$ and $\pi's$ are 
connected by a homeomorphism $\phi: \Gamma \to \Gamma'$ 
as $\pi = \pi'\circ\phi$.  Let $[\pi]$ denote 
the equivalence class of the covering $\pi$.  
Note that $\Gamma$ can be a disconnected surface.  

Given a positive integer $d$ and a set of 
distinct points $P_1,\cdots,P_r$ of $\CC\PP^1$, 
there are only a finite number of nonequivalent 
$d$-fold coverings that are unramified 
over all points other than $P_1,\cdots,P_r$.   
Those nonequivalent coverings can be further classified 
by the ramification data of sheets over $P_1,\cdots,P_r$.  
These data are given by conjugacy classes $C_1,\cdots,C_r$ 
of the $d$-th symmetric group $S_d$.  

A conjugacy class $C$ of $S_d$ is determined by 
the cycle type 
\beqnn
  \mu = (\mu_1,\cdots,\mu_l), \quad 
  \mu_1 \ge \cdots \ge \mu_l, \quad 
  |\mu| = \mu_1 + \cdots + \mu_l = d 
\eeqnn
of a representative $\sigma \in S_d$ of $C$, e.g., 
\beqnn
  \sigma = 
  (1,\cdots,\mu_1)(\mu_1+1,\cdots,\mu_1+\mu_2) 
  \cdots (\mu_1+\cdots+\mu_{l-1},\cdots,d), 
\eeqnn
where $(j_1,\cdots,j_m)$ denotes the cyclic permutation 
sending $j_1 \to j_2 \to\cdots \to j_m \to j_1$.  
The cycle type thus becomes a partition of $d$, 
and has another expression 
\beqnn
  \mu = (1^{m_1}2^{m_2}\cdots) 
\eeqnn
with the numbers $m_i$ of $i$-cycles.  
Let $C(\mu)$ denote the conjugacy class 
determined by a partition $\mu$ of $d$.  

A cyclic permutation $(j_1,\cdots,j_m)$ of length $m$ 
represents a cyclic covering of degree $m$ realized., 
e.g., by the Riemann sheets of $(z - a)^{1/m}$ 
above the point $a$.  We use the cycle type 
$\mu = (\mu_1,\cdots,\mu_l)$ as ramification data 
above a point of $\CC\PP^1$ to show that 
the sheets above that point locally look like 
a disjoint union of $l$ cyclic coverings 
of degree $\mu_1,\mu_2,\cdots,\mu_l$.  

The Hurwitz number $H_d(C_1,\cdots,C_r)$, 
also denoted by $H_d(\mu^{(1)},\cdots,\mu^{(r)})$ 
where $\mu^{(k)}$'s are the cycle types of $C_k$'s, 
is defined to be the sum 
\beqnn
  H_d(C_1,\cdots,C_r) 
  = \sum_{[\pi]}\frac{1}{|\mathrm{Aut}(\pi)|}
\eeqnn
of the weights $1/|\mathrm{Aut}(\pi)|$ over 
all equivalent classes $[\pi]$ of {\it possibly disconnected} 
coverings $\pi$ that are ramified over the $r$ points 
$P_1,\cdots,P_r$ with the ramification data $C_1,\cdots,C_r$.  
$\mathrm{Aut}(\pi)$ denotes the group of automorphisms of $\pi$.  
As the notation suggests, $H_d(C_1,\cdots,C_r)$ 
does not depend on the position of $P_1,\cdots,P_r$. 

The Hurwitz numbers can be determined by 
a genuinely group theoretical method 
(Burnside's theorem \cite{Burnside-book}).  
This leads to the beautiful formula 
\beq
  H_d(\mu^{(1)},\cdots,\mu^{(r)}) 
  = \sum_{|\lambda|=d} 
    \left(\frac{\dim\lambda}{d!}\right)^2
    \prod_{k=1}^r f_\lambda(\mu^{(k)}), 
\label{Burnside-formula}
\eeq
where the sum is over all partitions $\lambda$ of $d$.  
Let us explain the notations used in this formula.  

Firsly, $\dim\lambda$ denotes the dimension $\dim V_\lambda$ 
of the irreducible representation $(\rho_\lambda,V_\lambda)$ 
of $S_d$ determined by $\lambda$.  In other words, 
$\dim\lambda$ is the number of all standard tableaux 
of shape $\lambda$, and can be calculated by 
the so called hook length formula 
\beq
  \dim\lambda 
  = \dfrac{d!}{\displaystyle \prod_{(i,j)\in\lambda}h(i,j)}, 
\eeq
where $h(i,j)$ denotes the length of the hook cornered 
at the cell $(i,j) \in \lambda$.  Note that partitions 
are identified with the associated Young diagrams.  
As an immediate consequence, 
$\dim\lambda$ turns out to be symmetric 
under the transpose (or conjugate) 
$\lambda \mapsto \tp{\lambda}$ of the partition, namely, 
\beq
  \dim\tp{\lambda} = \dim\lambda. 
\label{dim(lambda)-symmetry}
\eeq

Secondly, $f_\lambda(\mu)$ denotes the value $f_\lambda(C(\mu))$ 
of the class function 
\beq
  f_\lambda(C) 
  = \frac{\chi_\lambda(C)}{\dim\lambda}|C| 
\eeq
on $S_d$ evaluated at $C = C(\mu)$.  
$\chi_\lambda$ denotes the character $\Tr_{V_\lambda}\rho_\lambda$. 
The cardinality $|C(\mu)|$ of $C(\mu) \subset S_d$ 
can be written as 
\beq
  |C(\mu)| = \frac{d!}{z_\mu},\quad 
  z_\mu = \prod_{i\ge 1}m_i!i^{m_i}. 
\eeq
The values of $f_\lambda(C)$ for the simplest two cases 
$C = (1^d),(1^{d-2}2)$ can be explicitly calculated as 
\beq
  f_\lambda(1^d) = 1, \quad 
  f_\lambda(1^{d-2}2) = \frac{\kappa_\lambda}{2}
\label{f(1^d),f(1^{d-2}2)}
\eeq
where 
\beqnn
  \kappa_\lambda 
  = \sum_{i=1}^l \lambda_i(\lambda_i-2i+1) 
  = \sum_{i=1}^l\left(\left(\lambda_i-i+\frac{1}{2}\right)^2 
      - \left(-i+\frac{1}{2}\right)^2 \right). 
\eeqnn
This number has another useful expression of the form 
\beq
  \kappa_\lambda = 2 \sum_{(i,j)\in\lambda} (i-j), 
\eeq
which implies, e.g., the anti-symmetric property  
\beq
  \kappa_{\tp{\lambda}} = - \kappa_\lambda. 
\label{kappa-antisymmetry}
\eeq

\subsection{Simple Hurwitz numbers}

Let us review the notion of simple Hurwitz numbers. 
They are the Hurwitz numbers such that the types 
of all but one ramification points $P_1,\cdots,P_r$ 
are restricted to $1^{d-2}2$.  The exceptional 
ramification point $P_{r+1}$ can have an arbitrary 
cycle type $\mu$.  The Hurwitz numbers of this type 
\beq
  H_d(\underbrace{1^{d-2}2,\cdots,1^{d-2}2}_r,\mu) 
  = \sum_{|\lambda|=d}
    \left(\frac{\dim\lambda}{d!}\right)^2 
    \left(\frac{\kappa_\lambda}{2}\right)^r
    f_\lambda(\mu) 
\label{simple-H_d}
\eeq
are called simple Hurwitz numbers. 
Introducing an extra (finite or infinite) set of 
variables $\bsx = (x_1,x_2,\cdots)$, one can construct 
a generating function of these numbers as 
\beqnn
  Z(\bsx) 
  = \sum_{r=0}^\infty\sum_{d=0}^\infty\sum_{|\mu|=d}
      H_d(\underbrace{1^{d-2}2,\cdots,1^{d-2}2}_r,\mu) 
      \frac{\beta^r}{r!}Q^dp_\mu, 
\eeqnn
where $p_\mu$'s are the monomials 
\beqnn
  p_\mu = p_{\mu_1}p_{\mu_2}\cdots 
\eeqnn
of the power sums 
\beqnn
  p_k = \sum_{i\ge 1}x_i^k, \quad k = 1,2,\cdots. 
\eeqnn
We now substitute (\ref{simple-H_d}) in the definition 
of $Z(\bsx)$ and use the Frobenius formula 
\beq
  \sum_{|\mu|=d}\frac{\chi_\lambda(C(\mu))}{z_\mu}p_\mu 
  = s_\lambda(\bsx) 
\label{Frobenius-formula}
\eeq
to rewrite the sum over $\mu$ into a Schur function 
times numerical factors.  The numerical factors 
partly cancel with some other factors.  
The generating function thus reduces to 
\beq
  Z(\bsx) 
  = \sum_{\lambda\in\calP}
    \frac{\dim\lambda}{|\lambda|!}
    e^{\beta\kappa_\lambda/2}Q^{|\lambda|}s_\lambda(\bsx). 
\eeq

$Z(\bsx)$ turns out to be a tau function of the KP hierarchy 
\cite{DJKM83,MJD-book}.  To see this, we change the variables 
from $\bsx$ to the standard time variables $\bst = (t_1,t_2,\cdots)$ 
of the KP hierarchy via the power sums as 
\beqnn
  t_k = \frac{p_k}{k} = \frac{1}{k}\sum_{i\ge 1}x_i^k 
\eeqnn
and consider the Schur functions $s_\lambda(\bsx)$ 
as functions of $\bst$.  It is convenient to use 
Zinn-Justin's notation $s_\lambda[\bst]$ for the latter  
\cite{Zinn-Justin09}.  

Actually, $s_\lambda[\bst]$ can be redefined directly.   
Let us recall the Jacobi-Trudi formula 
\beq
  s_\lambda(\bsx) = \det(h_{\lambda_i-i+j}(\bsx))_{i,j=1}^n, 
\eeq
where $\lambda_i$'s are parts of $\lambda$ 
(which are assumed to be equal to $0$ for $i > n$, namely, 
$\lambda = (\lambda_1,\cdots,\lambda_n,0,\cdots)$), 
and $h_m(\bsx)$'s are the complete symmetric functions 
defined by the generating function
\beqnn
  \sum_{m=0}^\infty h_m(\bsx)z^m 
  = \prod_{i\ge 1}(1 - x_iz)^{-1}. 
\eeqnn
Let $h_m[\bst]$ denotes $h_m(\bsx)$ considered as 
a function of $\bst$.  $h_m[\bst]$'s can be redefined 
by the generating function 
\beqnn
  \sum_{m=0}^\infty h_m[\bst]z^m 
  = \exp\left(\sum_{k=1}^\infty t_kz^k\right). 
\eeqnn
Consequently, $s_\lambda[\bst]$ can be expressed as 
\beq
  s_\lambda[\bst] = \det(h_{\lambda_i-i+j}[\bst])_{i,j=1}^n.
\eeq
One can thus replace $s_\lambda(\bsx)$ by $s_\lambda[\bst]$ 
to obtain the generating function 
\beq
  Z[\bst]  
  = \sum_{\lambda\in\calP}
    \frac{\dim\lambda}{|\lambda|!}
    e^{\beta\kappa_\lambda/2}Q^{|\lambda|}s_\lambda[\bst]. 
\label{Z[t]}
\eeq

Identifying $Z[\bst]$ as a KP tau function requires 
some more consideration.  Several methods are known 
in the literature \cite{KL06,GJ08,MM08,Kazarian08}.  
One can explain this fact in the context 
of the Toda hierarchy as well.  A clue is the formula 
\beq
  \frac{\dim\lambda}{|\lambda|!} 
  = s_\lambda[1,0,0,\cdots] 
\eeq
that can can be derived, e.g., 
from the Frobenius formula 
(\ref{Frobenius-formula}) by letting 
$p_1 = 1$ and $p_k = 0$ for $k > 1$.  
Consequently, $Z[\bst]$ can be rewritten as 
\beq
  Z[\bst] 
  = \sum_{\lambda\in\calP}
    e^{\beta\kappa_\lambda/2}Q^{|\lambda|}
    s_\lambda[\bst]s_\lambda[1,0,\cdots]. 
\eeq
This function is a specialization of 
the generating function $Z[\bst,\bar{\bst}]$ 
of double Hurwitz numbers introduced below.  
$Z[\bst,\bar{\bst}]$ is a tau function of the Toda hierarchy 
(or, rather, the two-component KP hierarchy\cite{DJKM83}, 
because the lattice coordinate of the Toda lattice 
is absent here).  It is well known \cite{UT84} 
that any Toda (or two-component KP) tau function 
is also a tau function of the KP hierarchy 
with respect to one of the two sets of variables.  
This implies that $Z[\bst]$ is a KP tau function.

\subsection{Double Hurwitz numbers} 

Let us choose yet another point $P_0$ of $\CC\PP^1$ 
of an arbitrary ramification type $\bar{\mu}$ 
in addition to the $r+1$ points in the case 
of simple Hurwitz numbers. The Hurwitz numbers of this type 
\beq
  H_d(\bar{\mu},\underbrace{1^{d-2}2,\cdots,1^{d-2}2}_r,\mu) 
  = \sum_{|\lambda|=d}
    \left(\frac{\dim\lambda}{d!}\right)^2 
    \left(\frac{\kappa_\lambda}{2}\right)^r
    f_\lambda(\mu) f_\lambda(\bar{\mu}) 
\label{double-H_d}
\eeq
are called double Hurwitz numbers.  
To construct a generating function of these numbers, 
we introduce a new set of variables 
$\bar{\bsx} = (\bar{x}_1,\bar{x}_2,\cdots)$, 
their power sums 
\beqnn
  \bar{p}_k = \sum_{i\ge 1}\bar{x}_i^k
\eeqnn
and their monomials 
\beqnn
  \bar{p}_\lambda 
  = \bar{p}_{\lambda_1}\bar{p}_{\lambda_2}\cdots 
\eeqnn
along with the variables in the case of simple Hurwitz numbers. 
Again, with the aid of the Frobenius formula 
(\ref{Frobenius-formula}), the generating function 
\beqnn
  Z(\bsx,\bar{\bsx}) 
  = \sum_{r=0}^\infty \sum_{d=0}^\infty \sum_{|\mu|=|\bar{\mu}|=d}
    H_d(\bar{\mu},\underbrace{1^{d-2}2,\cdots,1^{d-2}2}_r,\mu) 
    \frac{\beta^r}{r!}Q^d p_\mu\bar{p}_\mu 
\eeqnn
can be converted to a sum over all partitions as 
\beq
  Z(\bsx,\bar{\bsx}) 
  = \sum_{\lambda\in\calP}
    e^{\beta\kappa_\lambda/2}Q^{|\lambda|}
    s_\lambda(\bsx)s_\lambda(\bar{\bsx}).
\eeq

We now introduce the two sets 
$\bst = (t_1,t_2,\cdots)$ and 
$\bar{\bst} = (\bar{t}_1,\bar{t}_2,\cdots)$ 
of  time variables as 
\beqnn
  t_k = \frac{p_k}{k},\quad 
  \bar{t}_k = - \frac{\bar{p}_k}{k} 
\eeqnn
and consider the generating function 
\beq
  Z[\bst,\bar{\bst}] 
  = \sum_{\lambda\in\calP}
    e^{\beta\kappa_\lambda/2}Q^{|\lambda|}
    s_\lambda[\bst]s_\lambda[-\bar{\bst}] 
\label{Z[t,tbar]}
\eeq
of the new variables.  $Z[\bst,\bar{\bst}]$ is a tau function 
of the Toda hierarchy at a point of the lattice \cite{Okounkov00}
\footnote{A generalization of this tau function was first studied 
by Kharchev et al. \cite{KMMM93} in a different context.}.  
Reversing the sign of $\bar{\bst}$ is conventional 
in the formulation of the Toda hierarchy \cite{UT84} 
(cf. the fermionic representation of Toda tau functions 
reviewed in Section 3.4).  Any Toda tau function thereby 
becomes a tau function of the two-component KP hierarchy.  
Let us note that the the roles of $\bst$ and $\bar{\bst}$ 
in $Z[\bst,\bar{\bst}]$ can be interchanged 
by virtue of the identities 
\beq
  s_\lambda[\bst] = (-1)^{|\lambda|}s_{\tp{\lambda}}[-\bst],\quad 
  s_\lambda[-\bar{\bst}] = (-1)^{|\lambda|}s_{\tp{\lambda}}[\bar{\bst}]
\eeq
of the Schur functions and the property (\ref{kappa-antisymmetry}) 
of $\kappa_\lambda$.

\subsection{Cut-and-join operator}

The cut-and-join operator $M_0$ \cite{GJ97} may be thought of 
as an infinitesimal symmetry on the space of tau functions 
of the KP hierarchy \cite{KL06,MM08,Kazarian08}.  
In the KP time variables $\bst$, the cut-and-join operator reads 
\beq
  M_0 = \frac{1}{2}\sum_{j,k=1}^\infty 
        \left(klt_kt_l\frac{\rd}{\rd t_{k+l}} 
          + (k+l)t_{k+l}\frac{\rd^2}{\rd t_k\rd t_l}\right). 
\label{boson-M_0}
\eeq
The Schur functions $s_\lambda[\bst]$ are eigenfunctions 
of this operator with eigenvalues $\kappa_\lambda/2$:
\beq
  M_0 s_\lambda[\bst] = \frac{\kappa_\lambda}{2}s_\lambda[\bst]
\label{M_0Schur=...Schur}
\eeq
A combinatorial proof of this fact is presented 
in Zhou's paper \cite{Zhou03}.  As we shall remark in Section 3, 
the cut-and-join operator has a fermionic counterpart \cite{Okounkov00}, 
which leads to another proof of (\ref{M_0Schur=...Schur}).  

(\ref{M_0Schur=...Schur}) implies the identities 
\beqnn
  e^{\beta\kappa_\lambda/2}s_\lambda[\bst] 
  = e^{\beta M_0}s_\lambda[\bst]. 
\eeqnn
Therefore one can use $e^{\beta M_0}$ 
to recover $Z[\bst]$ and $Z[\bst,\bar{\bst}]$ 
from their ``initial values'' at $\beta = 0$ as 
\beqnn
  Z[\bst] = e^{\beta M_0}Z[\bst]|_{\beta=0}, \quad 
  Z[\bst,\bar{\bst}] = e^{\beta M_0}Z[\bst,\bar{\bst}]|_{\beta=0}.
\eeqnn
$Z[\bst]_{\beta=0}$ and $Z[\bst,\bar{\bst}]|_{\beta=0}$ 
can be calculated by the Cauchy identity
\beq
  \sum_{\lambda\in\calP}s_\lambda[\bst]s_\lambda[-\bar{\bst}]
  = \exp\left(- \sum_{k=1}^\infty kt_k\bar{t}_k\right) 
\label{Cauchy-identity}
\eeq
and the weighted homogeneity 
\beq
  s_\lambda[ct_1,c^2t_2,\cdots] 
  = c^{|\lambda|}s_\lambda[t_1,t_2,\cdots] 
\label{homogeneity}
\eeq
of Schur functions as 
\beqnn
  Z[\bst]|_{\beta=0}
  = \sum_{\lambda\in\calP}Q^{|\lambda|}
     s_\lambda[\bst]s_\lambda[1,0,0,\cdots]
  = e^{Qt_1}
\eeqnn
and 
\beqnn
  Z[\bst,\bar{\bst}]|_{\beta=0}
  = \sum_{\lambda\in\calP}Q^{|\lambda|}
     s_\lambda[\bst]s_\lambda[-\bar{\bst}]
  = \exp\left(- \sum_{k=1}^\infty Q^kkt_k\bar{t}_k\right). 
\eeqnn
One can thus derive the well known formula 
\cite{KL06,MM08,Kazarian08} 
\beq
  Z[\bst] = e^{\beta M_0}e^{Qt_1}
\label{Z[t]-M0}
\eeq
and its extension 
\beq
  Z[\bst,\bar{\bst}] 
  = e^{\beta M_0}\exp\left(- \sum_{k=1}^\infty Q^kkt_k\bar{t}_k\right) 
\label{Z[t,tbar]-M0}
\eeq
to double Hurwitz numbers.

\section{Fermionic representation of tau function}

\subsection{Two-dimensional free fermion system}

Let us introduce two-dimensional complex free fermion fields 
\beqnn
  \psi(z) = \sum_{n\in\ZZ} \psi_nz^{-n-1}, \quad 
  \psi^*(z) = \sum_{n\in\ZZ} \psi^*_nz^{-n}. 
\eeqnn
Note that we follow the notations of our previous work 
\cite{NT07,NT08} to use integers rather than half-integers 
for the labels of Fourier modes $\psi_n,\psi^*_n$. 
The Fourier modes satisfy the anti-commutation relations 
\beqnn
  \psi_m\psi^*_n + \psi^*_n\psi_m = \delta_{m+n,0}, \quad 
  \psi_m\psi_n + \psi_n\psi_m   = 0,\quad 
  \psi^*_m\psi^*_n + \psi^*_n\psi^*_m = 0. 
\eeqnn
The Fock space $\calH$ of ket vectors 
and its dual space $\calH^*$ of bra vectors 
are decomposed to charge-$s$ sectors $\calH_s,\calH^*_s$, 
$s \in \ZZ$.  Let $\langle s|$ and $|s\rangle$ denote 
the ground states in $\calH_s$ and $\calH^*_s$, namely, 
\beqnn
  \langle s| = \langle-\infty|\cdots\psi^*_{s-1}\psi^*_{s},\quad 
  |s\rangle = \psi_{-s}\psi_{-s+1}\cdots|-\infty\rangle, 
\eeqnn
which satisfy the annihilation conditions 
\beqnn
\begin{gathered}
  \psi_n|s\rangle = 0 \quad\mbox{for $n\ge -s$}, \quad 
  \psi^*_n|s\rangle = 0 \quad\mbox{for $n\ge s+1$},\\
  \langle s|\psi_n = 0 \quad\mbox{for $n\le -s-1$},\quad 
  \langle s|\psi^*_n = 0 \quad\mbox{for $n\le s$}. 
\end{gathered}
\eeqnn
Excited states can be labelled by partitions $\lambda 
= (\lambda_1,\lambda_2,\cdots,\lambda_n,0,0,\cdots)$ 
of arbitrary length as 
\beqnn
\begin{gathered}
  |\lambda,s\rangle 
 = \psi_{-\lambda_1-s}\cdots\psi_{-\lambda_n-s+n-1} 
   \psi^*_{s-n+1}\cdots\psi^*_{s}|s\rangle,\\
  \langle\lambda,s| 
 = \langle s|\psi_{-s}\cdots\psi_{-s+n-1}
   \psi^*_{\lambda_n+s-n+1}\cdots\psi^*_{\lambda_1+s}. 
\end{gathered}
\eeqnn
$|\lambda,s\rangle$ and $\langle\lambda,s|$ represent 
a state in which the semi-infinite subset 
$\{\lambda_i+s-i+1\}_{i=1}^\infty$ (sometimes referred to 
as the ``Maya diagram'') of the set $\ZZ$ 
of all ``energy levels'' are occupied by particles.   
These vectors give dual bases of of $\calH_s$ and 
$\calH^*_s$ in the sense that 
\beq
  \langle\lambda,r|\mu,s\rangle = \delta_{\lambda\mu}\delta_{rs}.
\eeq

The normal ordered fermion bilinears 
\beqnn
  {:}\psi_{-i}\psi^*_j{:} 
  = \psi_{-i}\psi^*_j - \langle 0|\psi_{-i}\psi^*_j|0\rangle, 
  \quad i,j \in \ZZ, 
\eeqnn
span the one-dimensional central extension $\widehat{\gl}(\infty)$ 
of the Lie algebra $\gl(\infty)$ of infinite matrices 
\cite{DJKM83,MJD-book}.  $\gl(\infty)$ consists of 
infinite matrices $A = (a_{ij})_{i,j\in\ZZ}$ of ``finite-band type'', 
namely, there is a positive integer $N$ (depending on $A$) 
such that $a_{ij} = 0 \quad \mbox{if $|i-j| > N$}$.  
For such a matrix $A \in \gl(\infty)$, the fermion bilinear 
\beqnn
  \widehat{A} = \sum_{i,j\in\ZZ}a_{ij}{:}\psi_{-i}\psi^*_j{:} 
\eeqnn
becomes a well-defined linear operator on the Fock space, 
and preserves the charge, namely, 
\beq
  \langle\lambda,r|\widehat{A}|\mu,s\rangle = 0 
  \quad\text{if $r \not= s$.}
\eeq
  Moreover, 
for two such matrices $A,B \in \gl(\infty)$, 
the associated operators $\widehat{A},\widehat{B}$ satisfy 
the commutation relation
\beq
  [\widehat{A},\widehat{B}] = \widehat{[A,B]} + \gamma(A,B) 
\eeq
with the $c$-number cocycle term 
\beq
  \gamma(A,B) = \Tr(A_{+-}B_{-+} - B_{+-}A_{-+}), 
\eeq
where $A_{\pm\mp}$ and $B_{\pm\mp}$ denote 
the following quarter blocks of $A,B$: 
\beqnn
\begin{gathered}
  A_{+-} = (a_{ij})_{i>0,\,j\le 0},\quad 
  A_{-+} = (a_{ij})_{i\le 0,\,j> 0}, \\
  B_{+-} = (b_{ij})_{i>0,\,j\le 0},\quad  
  B_{-+} = (b_{ij})_{i\le 0,\,j>0}. 
\end{gathered}
\eeqnn

\subsection{Special fermion bilinears}

The following fermion bilinears are building blocks 
of our Toda tau function: 
\beqnn
\begin{gathered}
  J_m = \sum_{n\in\ZZ} {:}\psi_{-n+m}\psi^*_n{:}, \quad m \in \ZZ,\\
  L_0 = \sum_{n\in\ZZ} n{:}\psi_{-n}\psi^*_n{:},\quad
  W_0 = \sum_{n\in\ZZ} n^2{:}\psi_{-n}\psi^*_n{:}.
\end{gathered}
\eeqnn
$J_m$'s span a $\mathrm{U}(1)$ current algebra. 
$L_0$ and $W_0$ are zero-modes of Virasoro and $W^{(3)}$ algebras.  
These fermion bilinears are associated with infinite matrices as 
\beq
  J_m = \widehat{\Lambda^m},\quad
  L_0 = \widehat{\Delta},\quad 
  W_0 = \widehat{\Delta^2}, 
\eeq
where $\Delta$ and $\Lambda$ are infinite matrices of the form 
\beqnn
  \Delta = (i\delta_{ij}), \quad 
  \Lambda = (\delta_{i+1,j}). 
\eeqnn

Let $J_{\pm}[\bst]$, $\bst = (t_1,t_2,\cdots)$, 
denote the special linear combinations 
\beqnn
  J_{+}[\bst] = \sum_{k=1}^\infty t_kJ_k, \quad 
  J_{-}[\bst] = \sum_{k=1}^\infty t_kJ_{-k}  
\eeqnn
of $J_m$'s.  Their exponentials act on 
the ground states $\langle s|$, $|s\rangle$ as 
\beq
  \langle s|e^{J_{+}[\bst]} 
  = \sum_{\lambda\in\calP}\langle\lambda,s|s_\lambda[\bst],\quad
  e^{J_{-}[\bst]}|s\rangle 
  = \sum_{\lambda\in\calP}s_\lambda[\bst]|\lambda,s\rangle, 
\label{e^J|s>}
\eeq
yielding Schur functions as matrix elements 
\cite{DJKM83,MJD-book}: 
\beq
  s_\lambda[\bst] 
  = \langle\lambda,s|e^{J_{+}[\bst]}|s\rangle 
  = \langle s|e^{J_{-}[\bst]}|\lambda,s\rangle. 
\label{Schur=<...>}
\eeq

Unlike other $J_m$'s, $J_0$ is diagonal with respect 
to $|\lambda,s\rangle$'s: 
\beq
  \langle\lambda,s|J_0|\mu,s\rangle = \delta_{\lambda\mu}s. 
\label{<ls|J_0|ms>}
\eeq
$L_0$ and $W_0$, too, are diagonal.  The diagonal matrix elements 
can be calculated as follows. 

\begin{lemma}
\begin{align}
  \langle\lambda,s|L_0|\mu,s\rangle   
    &= \delta_{\lambda\mu}
       \left(|\lambda| + \frac{s(s+1)}{2}\right), 
  \label{<ls|L_0|ms>}\\
  \langle\lambda,s|W_0|\mu,s\rangle
    &= \delta_{\lambda\mu}
       \left(\kappa_\lambda + (2s+1)|\lambda| 
        + \frac{s(s+1)(2s+1)}{6}\right). 
  \label{<ls|W_0|ms>}
\end{align}
\end{lemma}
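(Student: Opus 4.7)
My plan is to exploit the fact that $L_0$ and $W_0$ are linear combinations of the operators $\hat{N}_n := {:}\psi_{-n}\psi^*_n{:}$, each of which is simultaneously diagonalized by the basis $\{|\lambda,s\rangle\}$. Explicitly, $\hat{N}_n|\lambda,s\rangle = (\chi_n(\lambda,s) - \chi_n^{(0)})|\lambda,s\rangle$, where $\chi_n(\lambda,s) = 1$ if $n$ belongs to the Maya diagram $M(\lambda,s) = \{\lambda_i + s - i + 1\}_{i\ge 1}$ and $0$ otherwise, and $\chi_n^{(0)}$ is the analogous indicator for $\ZZ_{\le 0}$, i.e.\ the Maya diagram of $|0\rangle$; the subtraction arises from the normal ordering relative to $|0\rangle$. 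Since every $\hat{N}_n$ is diagonal in this basis, so are $L_0$ and $W_0$, which yields the factor $\delta_{\lambda\mu}$ in both identities. Each identity then reduces to evaluating a formal sum $\sum_n f(n)(\chi_n(\lambda,s) - \chi_n^{(0)})$ with $f(n) = n$ or $f(n) = n^2$.

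To regularize this sum, I would insert the intermediate reference Maya diagram $M(\emptyset,s) = \ZZ_{\le s}$ and decompose the sum into a ``$\lambda$-part'' $\sum_i(f(\lambda_i + s - i + 1) - f(s - i + 1))$, which is genuinely finite since only finitely many $\lambda_i$ are nonzero, plus an ``$s$-part'' depending only on $s$. The $s$-part equals $\sum_{k=1}^s f(k)$ for $s \ge 0$ and $-\sum_{k=s+1}^0 f(k)$ for $s < 0$; the standard closed forms $s(s+1)/2$ and $s(s+1)(2s+1)/6$ handle both sign cases uniformly, so the $s$-part contributes $s(s+1)/2$ to the $L_0$ eigenvalue and $s(s+1)(2s+1)/6$ to the $W_0$ eigenvalue.

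For $L_0$, the $\lambda$-part telescopes at once to $\sum_i \lambda_i = |\lambda|$, which combined with the $s$-part proves the first identity. For $W_0$, I would expand $(\lambda_i + s - i + 1)^2 - (s - i + 1)^2 = \lambda_i^2 + 2\lambda_i(s - i + 1)$, sum over $i$, and regroup the result as $\sum_i \lambda_i(\lambda_i - 2i + 1) + (2s+1)|\lambda| = \kappa_\lambda + (2s+1)|\lambda|$ using the definition of $\kappa_\lambda$ recalled earlier. Adding the $s$-part $s(s+1)(2s+1)/6$ yields the second identity.

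The only real obstacle is the bookkeeping required to regularize the formally divergent sum; once the split above is in place, the computation is mechanical, and no nontrivial combinatorial input is needed beyond the definition of $\kappa_\lambda$ already recorded in the paper.
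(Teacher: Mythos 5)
Your proof is correct and follows essentially the same route as the paper: both diagonalize $L_0$ and $W_0$ via the occupation numbers of the Maya diagram $\{\lambda_i+s-i+1\}$, regularize the formally divergent sum by subtracting the vacuum contribution, and split it into the finite $\lambda$-part $\sum_i\bigl(f(\lambda_i+s-i+1)-f(s-i+1)\bigr)$ plus the charge part $\sum_{k=1}^{s}f(k)$ (with the sign-reversed sum for $s<0$), identifying $\kappa_\lambda$ exactly as you do. Your write-up merely makes explicit the normal-ordering bookkeeping that the paper labels a ``heuristic expression'' before re-summing.
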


\begin{proof}
Assuming that $s \ge 0$, one can calculate 
the diagonal matrix elements as 
\beqnn
\begin{aligned}
  \langle\lambda,s|L_0|\lambda,s\rangle 
  &= \sum_{i=1}^\infty(\lambda_i+s-i+1) - \sum_{i=1}^\infty (-i+1)
     \quad \mbox{(heuristic expression)}\\
  &= \sum_{i=1}^\infty((\lambda_i+s-i+1) - (s-i+1)) + \sum_{k=0}^s k 
     \quad \mbox{(re-summed)}\\
  &= |\lambda| + \frac{s(s+1)}{2} 
\end{aligned}
\eeqnn
and 
\beqnn
\begin{aligned}
  \langle\lambda,s|W_0|\lambda,s\rangle 
  &= \sum_{i=1}^\infty(\lambda_i+s-i+1)^2 - \sum_{i=1}^\infty(-i+1)^2 
     \quad \mbox{(heuristic expression)}\\
  &= \sum_{i=1}^\infty(\lambda_i+s-i+1)^2 - (s-i+1)^2) 
     + \sum_{k=0}^s k^2 \quad \mbox{(re-summed)}\\
  &= \kappa_\lambda + (2s+1)|\lambda| 
     + \frac{s(s+1)(2s+1)}{6}. 
\end{aligned}
\eeqnn
In the case where $s < 0$, we have only to replace 
the intermediate sums over $k$ as 
\beqnn
  \sum_{k=0}^s k \to - \sum_{k=s+1}^0 k, \quad 
  \sum_{k=0}^s k^2 \to - \sum_{k=s+1}^0 k^2, 
\eeqnn
ending up with the same final expression 
of the matrix elements.  
\end{proof}

\subsection{Toda Tau function for double Hurwitz numbers}

A general tau function of the Toda hierarchy 
has the fermionic expression 
\beqnn
  \tau(s,\bst,\bar{\bst}) 
  = \langle s|e^{J_{+}[\bst]}ge^{-J_{-}[\bar{\bst}]}|s\rangle, 
\eeqnn
where $g$ is an element of $\widehat{\GL}(\infty)$ \cite{Takebe91}. 
Inserting the aforementioned expansion (\ref{e^J|s>}), 
one can expand this function as 
\beq
  \tau(s,\bst,\bar{\bst}) 
  = \sum_{\lambda,\mu\in\calP}
    \langle\lambda,s|g|\mu,s\rangle 
    s_\lambda[\bst]s_\mu[-\bar{\bst}]. 
\label{general-tau}
\eeq

Following Okounkov \cite{Okounkov00}, 
we now consider the special case where 
\beq
  g = e^{\beta W_0/2}Q^{L_0} 
    = Q^{L_0}e^{\beta W_0/2}. 
\label{Hurwitz-g}
\eeq

\begin{theorem}
The tau function determined by (\ref{Hurwitz-g}) 
can be expanded as 
\beq
  \tau(s,\bst,\bar{\bst}) 
  = e^{\beta s(s+1)(2s+1)/12}Q^{s(s+1)/2} 
    \sum_{\lambda\in\calP}
    e^{\beta\kappa_\lambda/2}(e^{\beta(s+1/2)}Q)^{|\lambda|}
    s_\lambda[\bst]s_\lambda[-\bar{\bst}].
\label{Hurwitz-tau}
\eeq
\end{theorem}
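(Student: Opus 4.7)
The plan is to evaluate the matrix elements $\langle\lambda,s|g|\mu,s\rangle$ for the specific choice $g = e^{\beta W_0/2}Q^{L_0}$ and substitute them into the general fermionic expansion (\ref{general-tau}) of a Toda tau function.

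First I would observe that the operators $L_0$ and $W_0$ are simultaneously diagonal in the basis $\{|\mu,s\rangle\}$. Indeed, each normal-ordered bilinear ${:}\psi_{-n}\psi^*_n{:}$ acts as the number operator that detects whether the energy level $n$ is occupied in the Maya diagram $\{\mu_i + s - i + 1\}_{i\ge 1}$ associated with $|\mu,s\rangle$. Hence $|\mu,s\rangle$ is a joint eigenvector of $L_0$ and $W_0$, and the eigenvalues are precisely the diagonal matrix elements computed in Lemma 1:
\beqnn
  L_0|\mu,s\rangle = \left(|\mu| + \frac{s(s+1)}{2}\right)|\mu,s\rangle,\quad
  W_0|\mu,s\rangle = \left(\kappa_\mu + (2s+1)|\mu| + \frac{s(s+1)(2s+1)}{6}\right)|\mu,s\rangle.
\eeqnn

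Next I would use these eigenvalues to compute the action of $g$. Since $L_0$ and $W_0$ commute (both diagonal in the same basis), the two orderings $e^{\beta W_0/2}Q^{L_0} = Q^{L_0}e^{\beta W_0/2}$ agree, and either one gives
\beqnn
  g|\mu,s\rangle = e^{\beta\kappa_\mu/2}\, e^{\beta(s+1/2)|\mu|}\, e^{\beta s(s+1)(2s+1)/12}\, Q^{|\mu|}\, Q^{s(s+1)/2}|\mu,s\rangle,
\eeqnn
using the factorization $\beta(2s+1)/2 = \beta(s+1/2)$. Together with the orthonormality $\langle\lambda,s|\mu,s\rangle = \delta_{\lambda\mu}$, this yields
\beqnn
  \langle\lambda,s|g|\mu,s\rangle = \delta_{\lambda\mu}\, e^{\beta s(s+1)(2s+1)/12}\, Q^{s(s+1)/2}\, e^{\beta\kappa_\mu/2}\, \bigl(e^{\beta(s+1/2)}Q\bigr)^{|\mu|}.
\eeqnn

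Finally, I would substitute this into (\ref{general-tau}). The Kronecker $\delta_{\lambda\mu}$ collapses the double sum over partitions to a single sum over $\lambda$, and the prefactors independent of $\lambda$ can be pulled outside the sum, producing exactly the right-hand side of (\ref{Hurwitz-tau}). There is no genuine obstacle in this argument; the only point that deserves care is the eigenvalue computation, but that was already dealt with in the preceding lemma. The proof is therefore essentially a bookkeeping exercise of collecting the $\lambda$-independent exponential and power-of-$Q$ factors into the overall prefactor.
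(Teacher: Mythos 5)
Your proposal is correct and follows essentially the same route as the paper: diagonalize $g$ on the basis $|\lambda,s\rangle$ using the eigenvalues of $L_0$ and $W_0$ from Lemma 1, read off the diagonal matrix elements, and substitute into the general expansion (\ref{general-tau}). Your version merely spells out the intermediate factorization $\beta(2s+1)/2 = \beta(s+1/2)$ and the number-operator justification for diagonality, which the paper leaves implicit.
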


\begin{proof}
The properties (\ref{<ls|L_0|ms>}) and (\ref{<ls|W_0|ms>}) 
of $L_0$ and $W_0$ imply that $g$ is also diagonalized 
on the basis $|\lambda,s\rangle$ of the Fock space. 
The diagonal matrix elements take such a form as 
\beqnn
  \langle\lambda,s|g|\lambda,s\rangle 
  = \exp\left(\frac{\beta}{2}\left(\kappa_\lambda+(2s+1)|\lambda|
       + \frac{s(s+1)(2s+1)}{6}\right)\right) 
    Q^{|\lambda|+s(s+1)/2}. 
\eeqnn
The tau function in question can be thereby expanded 
as (\ref{Hurwitz-tau}) shows. 
\end{proof}

This is a restatement of Okounkov's result \cite{Okounkov00}. 
One can rewrite (\ref{Hurwitz-tau}) as 
\beq
  \tau(s,\bst,\bar{\bst}) 
  = e^{\beta s(s+1)(2s+1)/12}Q^{s(s+1)/2} 
    Z_{\beta,e^{\beta(s+1/2)}Q}[\bst,\bar{\bst}], 
\label{Hurwitz-tau-Z}
\eeq
where $Z_{\beta,Q}[\bst,\bar{\bst}]$ denotes 
the generating function (\ref{Z[t,tbar]}) 
with the parameters $\beta$ and $Q$ being 
explicitly indicated.  Thus, apart from 
some numerical factors depending on $s$, 
the tau function coincides with the generating function 
of double Hurwitz numbers.  Note that the $s$-dependence 
shows up in the generating function as 
the multiplier $e^{\beta(s+1/2)}$ of the parameter $Q$. 

Let us conclude this section with a few comments 
on the cut-and-join operator (\ref{boson-M_0}). 
The cut-and-join operator corresponds to the fermion bilinear 
\beq
  M_0 = \frac{1}{2}\sum_{n\in\ZZ} 
        \left(n - \frac{1}{2}\right)^2{:}\psi_{-n}\psi^*_n{:} 
      = \frac{W_0}{2} - \frac{L_0}{2} + \frac{J_0}{8}. 
\label{fermion-M_0}
\eeq
One can readily see from (\ref{<ls|J_0|ms>}), 
(\ref{<ls|L_0|ms>}) and (\ref{<ls|W_0|ms>}) 
that this operator acts on $|\lambda,s\rangle$'s as 
\beq
  M_0|\lambda,s\rangle 
  = \left(\frac{\kappa_\lambda}{2} + s|\lambda| 
       + \frac{4s^3-s}{24}\right)|\lambda,s\rangle.
\eeq
In particular, $|\lambda\rangle = |\lambda,0\rangle$ 
is an eigenstate with eigenvalue $\kappa_\lambda/2$.  
Note that $s$-dependent extra terms show up 
in the charge-$s$ sector.  

Fermion bilinears of this type can be converted 
to differential (or ``bosonic'') operators 
by the boson-fermion correspondence \cite{DJKM83,MJD-book}.  
In a generating functional form, the normal-ordered product 
\beqnn
  {:}\psi(z)\psi^*(w){:} = \psi(z)\psi^*(w) - \frac{1}{z-w} 
  \quad (|z|<|w|) 
\eeqnn
of the fermion fields corresponds to 
the two-variable vertex operator 
\begin{multline*}
  X(z,w) \\
  = \frac{1}{z-w}\left(
      \left(\frac{z}{w}\right)^s
      \exp\left(\sum_{k=1}^\infty t_k(z^k-w^k)\right)
      \exp\left(-\sum_{k=1}^\infty\frac{z^{-k}-w^{-k}}{k}
           \frac{\rd}{\rd t_k}\right) 
      - 1\right)
\end{multline*}
as 
\beq
  \langle s|e^{J_{+}[\bst]}{:}\psi(z)\psi^*(w){:}
  = X(z,w)\langle s|e^{J_{+}[\bst]}. 
\eeq
A similar relation holds for $e^{-J_{-}[\bst]}|s\rangle$ 
and leads to bosonization with respect to $\bar{\bst}$ \cite{TT95}, 
though we omit details here. 

$\Gamma(z,w)$ can be expanded in powers of $z-w$, 
and the coefficients of this expansion give 
a bosonic representation of fermion bilinears.  
For $L_0,J_0$ and $W_0$, this bosonic representation reads 
\beq
  L_0 = \sum_{k=1}^\infty kt_k\frac{\rd}{\rd t_k} 
        + \frac{s(s+1)}{2}, \quad 
  J_0 = s 
\eeq
and 
\begin{multline}
  W_0 = \sum_{k,l=1}^\infty\left( klt_kt_l\frac{\rd}{\rd t_{k+l}}
         + (k+l)t_{k+l}\frac{\rd^2}{\rd t_k\rd t_l}\right)\\
    + (2s+1)\sum_{k=1}^\infty kt_k\frac{\rd}{\rd t_k}
    + \frac{s(s+1)(2s+1)}{6}. 
\end{multline}
(\ref{fermion-M_0}) is thus bosonized as 
\begin{multline}
  M_0 
  = \frac{1}{2}\sum_{k,l=1}^\infty
    \left( klt_kt_l\frac{\rd}{\rd t_{k+l}}
      + (k+l)t_{k+l}\frac{\rd^2}{\rd t_k\rd t_l}\right)\\
    + s\sum_{k=1}^\infty kt_k\frac{\rd}{\rd t_k}
    + \frac{4s^3-s}{24}. 
\end{multline}
In the charge-0 sector, this reduces to 
the cut-and-join operator (\ref{boson-M_0}).

\section{Generalized string equations for double Hurwitz numbers}

\subsection{Notations for difference operators}

Building blocks of the Lax formalism of 
the Toda hierarchy are one-dimensional difference operators 
in the lattice coordinate $s$ \cite{UT84}.  
Those operators are linear combinations 
of the shift operators $e^{n\rd_s}$, 
$e^{n\rd_s}f(s) = f(s+n)$.  
Although a genuine difference operator is a finite 
linear combination 
\beqnn
  A = \sum_{n=M}^N a_n(s)e^{n\rd_s} 
  \quad \mbox{(operator of $[M,N]$-type)}
\eeqnn
of the shift operators, one can consider 
a semi-infinite linear combination of the form 
\beqnn
  A = \sum_{n=-\infty}^N a_n(s)e^{n\rd_s}
  \quad \mbox{(operator of $(-\infty,N]$ type}) 
\eeqnn
or 
\beqnn
  A = \sum_{n=M}^\infty a_n(s)e^{n\rd_s}
  \quad \mbox{(operator of $[M,\infty)$ type)}
\eeqnn
as well, which amount to pseudo-differential operators 
in the Lax formalism of the KP hierarchy \cite{DJKM83}.  
Let $(\quad)_{\ge 0}$ and $(\quad)_{<0}$ denote 
the truncation operation 
\beqnn
  (A)_{\ge 0} = \sum_{n\ge 0}a_n(s)e^{n\rd_s}, \quad 
  (A)_{<0} = \sum_{n<0}a_n(s)e^{n\rd_s}. 
\eeqnn

Difference operators are in one-to-one correspondence 
with $\ZZ\times\ZZ$ matrices.  Firstly, the $n$-th shift operator
$e^{n\rd_s}$ corresponds to the shift matrix 
\beqnn
  \Lambda^n = (\delta_{i,j-n})_{i,j\in\ZZ}. 
\eeqnn
Secondly, the multiplication operator $a(s)$ 
amounts to the diagonal matrix 
\beqnn
  \diag(a(s)) = (a(i)\delta_{ij})_{i,j\in\ZZ}. 
\eeqnn
In particular, the multiplication operator $s$ corresponds to 
\beqnn
  \Delta = \diag(s) = (i\delta_{ij})_{i,j\in\ZZ}. 
\eeqnn
Consequently, a general difference operator 
\beqnn
  A = A(s,e^{\rd_s}) = \sum_{n} a_n(s)e^{n\rd_s}
\eeqnn
is converted to the infinite matrix 
\beqnn
  A(\Delta,\Lambda) 
  = \sum_{n}\diag(a_n(s))\Lambda^n 
  = \sum_{n} (a_n(i)\delta_{i,j-n})_{i,j\in\ZZ}. 
\eeqnn
Occasionally, it might be more convenient to write 
a difference operator in an anti-normal-ordered form as 
\beqnn
  B(e^{\rd_s},s) = \sum_{n} e^{n\rd_s}b_n(s). 
\eeqnn
In that case, the corresponding infinite matrix reads 
\beqnn
  B(\Lambda,\Delta ) = \sum_{n} \Lambda^n\diag(b_n(s)). 
\eeqnn

\subsection{Lax and Orlov-Schulman operators}

The Lax formalism of the Toda hierarchy uses 
two Lax operators $L,\bar{L}$ of type $(-\infty,1]$ 
and $[1,\infty)$.   Actually, from the point of view of symmetry, 
it is better to consider $L$ and $\bar{L}^{-1}$ 
rather than $L$ and $\bar{L}$.  These operators 
admit freedom of gauge transformations.  
In the gauge where $L$ is {\it monic} (namely, 
the leading coefficients is equal to $1$), 
$L$ and $\bar{L}^{-1}$ can be expressed as 
\beqnn
\begin{aligned}
  L &= e^{\rd_s} + \sum_{n=1}^\infty u_ne^{(1-n)\rd_s},\\
  \bar{L}^{-1} &= \bar{u}_0e^{-\rd_s} 
     + \sum_{n=1}^\infty \bar{u}_n e^{(n-1)\rd_s}. 
\end{aligned}
\eeqnn
The coefficients $u_n$ and $\bar{u}_n$ are functions 
of $s$ and the time variables $\bst,\bar{\bst}$, and 
written as $u_n(s,\bst,\bar{\bst})$ and $\bar{u}_n(s,\bst,\bar{\bst})$ 
if we do not suppress the independent variables.  
$L$ and $\bar{L}$ satisfy the Lax equations 
\beq
\begin{gathered}
  \frac{\rd L}{\rd t_n} = [B_n,L], \quad 
  \frac{\rd L}{\rd\bar{t}_n} = [\bar{B}_n,L], \\
  \frac{\rd\bar{L}}{\rd t_n} = [B_n,\bar{L}],\quad
  \frac{\rd\bar{L}}{\rd\bar{t}_n} = [\bar{B}_n,\bar{L}], 
\end{gathered}
\label{LLbar-Laxeq}
\eeq
where $B_n$ and $\bar{B}_n$ are defined as 
\beqnn
  B_n = (L^n)_{\ge 0}, \quad 
  \bar{B}_n = (\bar{L}^{-n})_{<0}. 
\eeqnn

To formulate the generalized string equations, 
we need another pair of difference operators $M,\bar{M}$, 
namely, the Orlov-Schulman operators \cite{TT93}. 
These operators, too, satisfy the Lax equations 
\beq
\begin{gathered}
  \frac{\rd M}{\rd t_n} = [B_n,M], \quad 
  \frac{\rd M}{\rd\bar{t}_n} = [\bar{B}_n,M], \\
  \frac{\rd\bar{M}}{\rd t_n} = [B_n,\bar{M}],\quad
  \frac{\rd\bar{M}}{\rd\bar{t}_n} = [\bar{B}_n,\bar{M}] 
\end{gathered}
\label{MMbar-Laxeq}
\eeq
of the same form as the Lax operators do, 
and are related to the Lax operators 
by the twisted canonical commutation relations 
\beq
  [L,M] = L, \quad [\bar{L},\bar{M}] = \bar{L}. 
\label{LM-CCR}
\eeq

By ``generalized string equations'' we mean 
equations of the form 
\beq
  C(L,M) = \bar{C}(\bar{L},\bar{M}), 
\label{C(L,M)=Cbar(Lbar,Mbar)}
\eeq
where $C(L,M)$ and $\bar{C}(\bar{L},\bar{M})$ are 
(possibly infinite) linear combinations of monomials 
of $L,M$ and $\bar{L},\bar{M}$ with constant coefficients.  
The following lemma \cite{NTT95,Takasaki96} explains 
an origin of generalized string equations. 

\begin{lemma}
If the fermion bilinears $\widehat{C(\Lambda,\Delta)}$ 
and $\widehat{\bar{C}(\Lambda,\Delta)}$ are intertwined 
by a $\widehat{GL}(\infty)$ element $g$ as 
\beq
  \widehat{C(\Lambda,\Delta)}g 
  = g\widehat{\bar{C}(\Lambda,\Delta)} 
\label{C(Lam,Del)=Cbar(Lam,Del)}, 
\eeq
then the Lax and Orlov-Schulman operators satisfy 
(\ref{C(L,M)=Cbar(Lbar,Mbar)}). 
\end{lemma}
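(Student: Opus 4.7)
The plan is to transport the intertwining relation across the time-evolution factors $e^{J_+[\bst]}$ and $e^{-J_-[\bar{\bst}]}$ and then translate the resulting fermionic identity into a difference-operator identity via the dressing correspondence of Takasaki--Takebe \cite{TT93,Takebe91}; this is the strategy of the author's earlier papers \cite{NTT95,Takasaki96} on analogous lemmas. Recall that the tau function $\tau = \langle s|e^{J_+[\bst]} g e^{-J_-[\bar{\bst}]}|s\rangle$ admits dressing operators $W$ of type $(-\infty,0]$ with leading $1$ and $\bar W$ of type $[0,\infty)$, such that
\[
L = W e^{\rd_s} W^{-1}, \quad M = W M^{(0)} W^{-1}, \quad M^{(0)} = s + \sum_{k\ge 1} k t_k\, e^{k \rd_s},
\]
and
\[
\bar L^{-1} = \bar W e^{-\rd_s} \bar W^{-1}, \quad \bar M = \bar W \bar M^{(0)} \bar W^{-1}, \quad \bar M^{(0)} = s - \sum_{k\ge 1} k \bar t_k\, e^{-k \rd_s}.
\]

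The crucial fermionic computation is that, using $[J_k, \widehat{A}] = \widehat{[\Lambda^k, A]}$ modulo cocycle, the identity $[\Lambda^k, \Delta] = k\Lambda^k$, and the closure of iterated brackets with $J_+[\bst]$ (which involve only mutually commuting modes up to $c$-numbers), one finds
\[
e^{J_+[\bst]} \widehat{C(\Lambda,\Delta)} e^{-J_+[\bst]} = \widehat{C(\Lambda, M^{(0)})} + (\text{$c$-number}),
\]
with $M^{(0)} = \Delta + \sum k t_k \Lambda^k$ viewed as a matrix; conjugation by $e^{J_+[\bst]}$ thus promotes $\Delta$ to the bare Orlov--Schulman matrix while leaving $\Lambda$ untouched. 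The parallel computation on the other side gives
\[
e^{J_-[\bar{\bst}]} \widehat{\bar C(\Lambda,\Delta)} e^{-J_-[\bar{\bst}]} = \widehat{\bar C(\Lambda, \bar M^{(0)})} + (\text{$c$-number}),
\]
with $\bar M^{(0)} = \Delta - \sum k \bar t_k \Lambda^{-k}$, by virtue of $[\Lambda^{-k}, \Delta] = -k\Lambda^{-k}$.

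Multiplying (\ref{C(Lam,Del)=Cbar(Lam,Del)}) by $e^{J_+[\bst]}$ on the left and $e^{-J_-[\bar{\bst}]}$ on the right, inserting $e^{\mp J_\pm}$ to conjugate the two fermion bilinears, and checking that the $c$-number shifts match between the two sides, the intertwining becomes
\[
\widehat{C(\Lambda, M^{(0)})}\, G \,=\, G\, \widehat{\bar C(\Lambda, \bar M^{(0)})}, \qquad G = e^{J_+[\bst]} g e^{-J_-[\bar{\bst}]}.
\]
Bosonization then converts this into a difference-operator identity: insertion of $\widehat{A}$ immediately to the left of $G$ inside the vacuum expectation value $\langle s|\cdots|s\rangle$ acts on the Baker--Akhiezer wave function as $W A(e^{\rd_s}, s) W^{-1}$, and insertion immediately to the right as $\bar W A(e^{\rd_s}, s) \bar W^{-1}$. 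Applied to the displayed identity, this yields $W C(e^{\rd_s}, M^{(0)}) W^{-1} = \bar W \bar C(e^{\rd_s}, \bar M^{(0)}) \bar W^{-1}$, and since conjugation by $W$ (resp. $\bar W$) is an algebra homomorphism sending $(e^{\rd_s}, M^{(0)})$ to $(L, M)$ (resp. $(\bar L^{-1}, \bar M)$), this is exactly $C(L, M) = \bar C(\bar L, \bar M)$, i.e., (\ref{C(L,M)=Cbar(Lbar,Mbar)}).

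The main obstacle is the bosonization translation in the final step: showing that a fermion bilinear at a designated position in the dressed expectation value realizes precisely the $W$- or $\bar W$-conjugated difference operator. This is the content of the Orlov--Schulman dressing theorem; it rests on the vertex-operator realization of $\psi(z), \psi^*(z)$, which converts fermionic commutators $[\widehat{A}, \psi(z)]$ into $\bst$- or $\bar{\bst}$-differential operators on the wave function, exactly reproducing the shift terms in $M^{(0)}$ and $\bar M^{(0)}$. A secondary technicality is the cancellation of the $c$-number contributions: since the central cocycle $\gamma$ of Section~3.1 contributes only to commutators of bilinears, the additive shifts arise only from the negative-mode parts of $C$ and $\bar C$ hitting the time-variable factors, and one verifies directly that they match on both sides of the intertwining.
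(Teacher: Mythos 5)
The paper itself offers no proof of this lemma: it is quoted directly from \cite{NTT95,Takasaki96}, so there is no in-text argument to compare against. Your reconstruction follows exactly the strategy of those references --- conjugation by $e^{J_{+}[\bst]}$ and $e^{J_{-}[\bar{\bst}]}$ promotes $\Delta$ to the bare Orlov--Schulman matrices $M^{(0)}$ and $\bar{M}^{(0)}$ via $[\Lambda^{\pm k},\Delta]=\pm k\Lambda^{\pm k}$, and the dressing (Riemann--Hilbert factorization of $e^{\xi(\bst,\Lambda)}g\,e^{-\xi(\bar{\bst},\Lambda^{-1})}$ into $W^{-1}\bar{W}$) turns the resulting matrix intertwining into $C(L,M)=\bar{C}(\bar{L},\bar{M})$ --- and the outline is correct. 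Two remarks. First, the cleanest route, and the one actually used in \cite{Takasaki96}, works at the level of $\ZZ\times\ZZ$ matrices rather than fermion bilinears: there the central extension is absent and the argument is four lines of conjugation algebra; lifting it to $\widehat{\gl}(\infty)$ as you do forces you to track the cocycle. Second, your treatment of that cocycle is the one genuinely unfinished step: after conjugation you obtain $\bigl(\widehat{C(\Lambda,M^{(0)})}+c_1\bigr)G = G\bigl(\widehat{\bar{C}(\Lambda,\bar{M}^{(0)})}+c_2\bigr)$, and the conclusion as stated requires $c_1=c_2$ (e.g.\ $[J_k,J_{-l}]=k\delta_{kl}$ is a nonzero central term, so such constants do arise when negative modes are involved). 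You assert that the shifts ``match on both sides'' but do not verify it; for the general statement this is not automatic, and for the specific relations (\ref{Jg=gCbar})--(\ref{gJ=Cg}) used later it should be checked explicitly or sidestepped by descending to the matrix level. Also a small notational slip: conjugation by $\bar{W}$ sends $e^{\rd_s}$ to $\bar{L}$, not $\bar{L}^{-1}$, though your final displayed identity is the correct one.
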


\subsection{Intertwining relations}

Let us now consider the case of the solution determined by 
the $\widehat{GL}(\infty)$ element (\ref{Hurwitz-g}).  
We seek intertwining relations in the following special form: 
\beqnn
  J_kg = g\widehat{\bar{C}(\Lambda,\Delta)}, \quad 
  \widehat{C(\Lambda,\Delta)}g = gJ_{-k}, \quad 
  k = 1,2,\cdots. 
\eeqnn

\begin{lemma}
$J_m$'s are transformed by the adjoint action 
of $Q^{L_0}$ and $e^{\beta W_0/2}$ as 
\beq
\begin{gathered}
  Q^{L_0}J_mQ^{-L_0} = Q^{-m}J_m,\\
  e^{-\beta W_0/2}J_me^{\beta W_0/2} 
    = e^{-\beta m^2/2}\sum_{n\in\ZZ}e^{\beta mn}{:}\psi_{-n+m}\psi^*_n{:}.
\end{gathered}
\label{J_m-transform}
\eeq
\end{lemma}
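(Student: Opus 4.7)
The plan is to reduce the adjoint action on the bilinear $J_m$ to the adjoint action on individual fermion modes. First, I would compute the elementary commutators of $L_0,W_0$ with the modes $\psi_k,\psi^*_k$. Using the identity $[AB,C]=A\{B,C\}-\{A,C\}B$ valid for anti-commuting $A,B,C$, together with the canonical anti-commutation relations, one gets
\beqnn
  [:\psi_{-n}\psi^*_n:,\psi_k] = \delta_{n,-k}\psi_k,\qquad
  [:\psi_{-n}\psi^*_n:,\psi^*_k] = -\delta_{n,k}\psi^*_k,
\eeqnn
since the c-number subtraction implicit in the normal ordering commutes with everything. Summing over $n$ weighted by $n$ and $n^2$ respectively yields
\beqnn
  [L_0,\psi_k] = -k\psi_k,\ [L_0,\psi^*_k] = -k\psi^*_k,\qquad
  [W_0,\psi_k] = k^2\psi_k,\ [W_0,\psi^*_k] = -k^2\psi^*_k.
\eeqnn

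Next, applying $e^{A}Be^{-A}=e^{\mathrm{ad}_A}(B)$, I would exponentiate these to obtain the multiplicative adjoint actions
\beqnn
  Q^{L_0}\psi_kQ^{-L_0} = Q^{-k}\psi_k,\qquad
  Q^{L_0}\psi^*_kQ^{-L_0} = Q^{-k}\psi^*_k,
\eeqnn
\beqnn
  e^{-\beta W_0/2}\psi_ke^{\beta W_0/2} = e^{-\beta k^2/2}\psi_k,\qquad
  e^{-\beta W_0/2}\psi^*_ke^{\beta W_0/2} = e^{\beta k^2/2}\psi^*_k.
\eeqnn
Because these adjoint actions merely rescale each mode, they extend as automorphisms to products of fermion operators; one need only check that they preserve the normal ordering used to define $J_m$. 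For $m\ne 0$ this is automatic, since $\langle 0|\psi_{-n+m}\psi^*_n|0\rangle$ is proportional to $\delta_{m,0}$ and so $:\psi_{-n+m}\psi^*_n: = \psi_{-n+m}\psi^*_n$; the case $m=0$ is trivial on both sides of the identities to be proved.

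Finally, I would apply the two adjoint actions termwise in $J_m=\sum_n {:}\psi_{-n+m}\psi^*_n{:}$. For the $Q^{L_0}$ transformation, the factors combine as $Q^{-(-n+m)}\cdot Q^{-n}=Q^{-m}$, which is independent of $n$ and thus pulls out of the sum, giving the first identity. For the $e^{\beta W_0/2}$ transformation, the two exponents add up to $\tfrac{\beta}{2}(n^2-(-n+m)^2) = \tfrac{\beta}{2}(2nm-m^2) = \beta mn - \beta m^2/2$, producing the desired weight $e^{-\beta m^2/2}e^{\beta mn}$ in each term of the sum.

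The only delicate point is the treatment of normal ordering when one passes from the commutator on modes to the adjoint action on $J_m$; but as noted above, the diagonal nature of the adjoint action (no mixing of modes) together with the vanishing of the vacuum expectation for $m\ne 0$ makes this routine rather than an obstruction.
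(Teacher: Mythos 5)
Your proposal is correct and follows essentially the same route as the paper's own proof: compute the commutators $[L_0,\psi_n]$, $[W_0,\psi_n]$ and their duals, exponentiate them to adjoint actions on individual modes, and apply the result termwise to the bilinears ${:}\psi_{-n+m}\psi^*_n{:}$. Your extra remarks on deriving the elementary commutators via $[AB,C]=A\{B,C\}-\{A,C\}B$ and on why the normal-ordering constant is harmless are correct refinements of steps the paper takes for granted.
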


\begin{proof}
Let us note the fundamental commutation relations 
\beqnn
  [L_0,\psi_n] = - n\psi_n,\quad 
  [L_0,\psi^*_n] = - n\psi^*_n 
\eeqnn
and 
\beqnn
  [W_0,\psi_n] = n^2\psi_n,\quad 
  [W_0,\psi^*_n] = - n^2\psi^*_n
\eeqnn
that follow from the definition of $L_0$ and $W_0$.  
These commutation relations can be exponentiated as 
\beqnn
  Q^{L_0}\psi_nQ^{-L_0} = Q^{-n}\psi_n,\quad 
  Q^{L_0}\psi^*_nQ^{-L_0} = Q^{-n}\psi^*_n 
\eeqnn
and 
\beqnn
  e^{-\beta W_0/2}\psi_ne^{\beta W_0/2} 
  = e^{-\beta n^2/2}\psi_n, \quad 
  e^{-\beta W_0/2}\psi^*_ne^{\beta W_0/2}
  = e^{\beta n^2/2}\psi^*_n. 
\eeqnn
Consequently, the exponentiated operators $Q^{L_0}$ 
and $e^{\beta W_0/2}$ act on the basis ${:}\psi_{-i}\psi^*_j{:}$ 
of $\gl(\infty)$ as 
\beqnn
  Q^{L_0}{:}\psi_{-i}\psi^*_j{:}Q^{-L_0} 
  = Q^{i-j}{:}\psi_{-i}\psi^*_j{:}
\eeqnn
and 
\beqnn
  e^{-\beta W_0/2}{:}\psi_{-i}\psi^*_j{:}e^{\beta W_0/2}
  = e^{-\beta(i^2-j^2)/2}{:}\psi_{-i}\psi^*_j{:}. 
\eeqnn
One can thereby derive (\ref{J_m-transform}) as 
\beqnn
  Q^{L_0}J_mQ^{-L_0} 
  = \sum_{n\in\ZZ}Q^{(n-m)-n}{:}\psi_{-n+m}\psi^*_n{:} 
  =  Q^{-m}J_m
\eeqnn
and 
\beqnn
\begin{aligned}
  e^{-\beta W_0/2}J_me^{\beta W_0/2} 
  &= \sum_{n\in\ZZ}
      e^{-\beta((n-m)^2-n^2)/2}{:}\psi_{-n+m}\psi^*_n{:}\\
  &= e^{-\beta m^2/2}
    \sum_{n\in\ZZ}e^{\beta mn}{:}\psi_{-n+m}\psi^*_n{:}.
\end{aligned}
\eeqnn
\end{proof}

\begin{lemma}
$J_{\pm k}$'s are connected with the fermion bilinears 
$\widehat{\Lambda^ke^{\beta k\Delta}}$ 
and $\widehat{\Lambda^{-k}e^{\beta k\Delta}}$ 
by the $\widehat{\GL}(\infty)$ element (\ref{Hurwitz-g}) as 
\begin{gather}
  J_kg = gQ^ke^{-\beta k^2/2}\widehat{\Lambda^ke^{\beta k\Delta}}, 
  \label{Jg=gCbar}\\
  gJ_{-k} = Q^ke^{\beta k^2/2}\widehat{\Lambda^{-k}e^{\beta k\Delta}}g. 
  \label{gJ=Cg}
\end{gather}
\end{lemma}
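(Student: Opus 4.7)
The plan is to push $J_{\pm k}$ across the two factors of $g$ using the adjoint-action formulas in (\ref{J_m-transform}), and to identify the resulting normal-ordered sums with $\widehat{\Lambda^{\pm k}e^{\beta k\Delta}}$ via a direct matrix-entry computation.

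For (\ref{Jg=gCbar}), I would write $g = Q^{L_0}e^{\beta W_0/2}$ and move $J_k$ rightward. Rearranging the first line of (\ref{J_m-transform}) gives $J_kQ^{L_0} = Q^k Q^{L_0}J_k$, and the second line, applied in the form $J_k e^{\beta W_0/2} = e^{\beta W_0/2}\cdot e^{-\beta k^2/2}\sum_{n\in\ZZ}e^{\beta kn}{:}\psi_{-n+k}\psi^*_n{:}$, brings a factor to the right of $g$. Since $(\Lambda^k e^{\beta k\Delta})_{ij} = e^{\beta k j}\delta_{i+k,j}$, the substitution $n = i+k$ in the definition $\widehat{A} = \sum a_{ij}{:}\psi_{-i}\psi^*_j{:}$ identifies the sum with $\widehat{\Lambda^k e^{\beta k\Delta}}$, yielding (\ref{Jg=gCbar}).

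The dual identity (\ref{gJ=Cg}) is obtained by the mirror move: take $g = e^{\beta W_0/2}Q^{L_0}$ and act with $J_{-k}$ on the right. The first line of (\ref{J_m-transform}) with $m=-k$ gives $Q^{L_0}J_{-k} = Q^kJ_{-k}Q^{L_0}$; reversing the direction of $W_0$-conjugation in the second line flips the sign of the $\beta m^2/2$ prefactor, producing $e^{\beta W_0/2}J_{-k}e^{-\beta W_0/2} = e^{\beta k^2/2}\sum_n e^{\beta kn}{:}\psi_{-n-k}\psi^*_n{:}$. Because $(\Lambda^{-k}e^{\beta k\Delta})_{ij} = e^{\beta kj}\delta_{i-k,j}$, the substitution $n = i-k$ identifies this bilinear with $\widehat{\Lambda^{-k}e^{\beta k\Delta}}$, yielding (\ref{gJ=Cg}).

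The main obstacle is bookkeeping: four sign conventions interact, namely the sign of $m$ in the $Q^{L_0}$-conjugation, the sign flip of $\beta m^2/2$ when the $W_0$-conjugation is reversed, the shift direction encoded in $\Lambda^{\pm k}$, and the index convention $a_{ij}\mapsto{:}\psi_{-i}\psi^*_j{:}$ defining $\widehat{\cdot}$. Once these are tracked consistently, both identities reduce to (\ref{J_m-transform}) together with a one-line index substitution.
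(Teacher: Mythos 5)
Your proposal is correct and follows essentially the same route as the paper: both rest on the conjugation formulas (\ref{J_m-transform}) (with $\beta\to-\beta$ for the second identity) followed by recognizing the resulting normal-ordered sums as $\widehat{\Lambda^{\pm k}e^{\beta k\Delta}}$ via the index convention $a_{ij}\mapsto{:}\psi_{-i}\psi^*_j{:}$. The only cosmetic difference is that you push $J_{\pm k}$ through the two commuting factors of $g$ one at a time, whereas the paper conjugates by $g$ as a whole; your sign bookkeeping and final formulas all check out.
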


\begin{proof}
Using the relations (\ref{J_m-transform}) in the previous lemma, 
one can calculate $g^{-1}J_kg$ as 
\beqnn
\begin{aligned}
  g^{-1}J_kg 
  &= e^{-\beta W_0/2}Q^{-L_0}J_kQ^{L_0}e^{\beta W_0/2}\\
  &= Q^ke^{-\beta W_0/2}J_ke^{\beta W_0/2}\\
  &= Q^ke^{-\beta k^2/2} 
     \sum_{n\in\ZZ}e^{\beta kn}{:}\psi_{-n+k}\psi^*_n{:}. 
\end{aligned}
\eeqnn
Since the last sum can be rewritten as 
\beqnn
  \sum_{n\in\ZZ}e^{\beta kn}{:}\psi_{-n+k}\psi^*_n{:} 
  = \widehat{\Lambda^ke^{\beta k\Delta}}, 
\eeqnn
the first intertwining relation (\ref{Jg=gCbar}) follows.  
In the same way, one can calculate $gJ_{-k}g^{-1}$ as 
\beqnn
\begin{aligned}
  gJ_{-k}g^{-1} 
  &= e^{\beta W_0/2}Q^{L_0}J_{-k}Q^{-L_0}e^{-\beta W_0/2}\\
  &= Q^ke^{\beta W_0/2}J_{-k}e^{-\beta W_0/2}\\
  &= Q^ke^{\beta k^2/2}
     \sum_{n\in\ZZ}e^{\beta kn}{:}\psi_{-n-k}\psi^*_n{:}, 
\end{aligned}
\eeqnn
which implies the second intertwining relation (\ref{gJ=Cg}). 
\end{proof}

\subsection{Generalized string equations}

\begin{theorem}
The Lax and Orlov-Schulman operators of 
the tau function (\ref{Hurwitz-tau}) satisfy 
the generalized string equations 
\beq
  L^k = Q^ke^{-\beta k^2/2}\bar{L}^ke^{\beta k\bar{M}}, \quad 
  \bar{L}^{-k} = Q^ke^{\beta k^2/2}L^{-k}e^{\beta kM} 
\label{str-eq(k)}
\eeq
for $k = 1,2,\cdots$.  Moreover, these equations 
can be derived from the first two ($k = 1$) equations
\beq
  L = Qe^{-\beta/2}\bar{L} e^{\beta\bar{M}},\quad 
  \bar{L}^{-1} = Qe^{\beta/2}L^{-1}e^{\beta M}. 
\label{str-eq}
\eeq
\end{theorem}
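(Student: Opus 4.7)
My plan has two stages. First, I would derive the full family (\ref{str-eq(k)}) directly from the intertwining identities (\ref{Jg=gCbar})--(\ref{gJ=Cg}) by invoking the intertwining lemma for each $k$. Second, I would show that the $k=1$ case (\ref{str-eq}) alone forces the rest by induction, using the canonical commutation relations (\ref{LM-CCR}).

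For the first stage, since $J_{\pm k}=\widehat{\Lambda^{\pm k}}$ and the hat map is linear in the matrix, (\ref{Jg=gCbar}) can be read as $\widehat{\Lambda^k}\,g = g\cdot\widehat{Q^ke^{-\beta k^2/2}\Lambda^k e^{\beta k\Delta}}$, which fits the hypothesis of the intertwining lemma with $C(\Lambda,\Delta)=\Lambda^k$ and $\bar{C}(\Lambda,\Delta)=Q^ke^{-\beta k^2/2}\Lambda^ke^{\beta k\Delta}$. The conclusion $L^k = Q^ke^{-\beta k^2/2}\bar{L}^ke^{\beta k\bar{M}}$ is exactly the first equation of (\ref{str-eq(k)}). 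For (\ref{gJ=Cg}), I would rewrite it as $\widehat{\Lambda^{-k}e^{\beta k\Delta}}\,g = g\cdot\widehat{Q^{-k}e^{-\beta k^2/2}\Lambda^{-k}}$ and take $C = \Lambda^{-k}e^{\beta k\Delta}$, $\bar{C} = Q^{-k}e^{-\beta k^2/2}\Lambda^{-k}$; the lemma then yields $L^{-k}e^{\beta kM}=Q^{-k}e^{-\beta k^2/2}\bar{L}^{-k}$, which rearranges to the second equation.

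For the second stage, the key observation is that (\ref{LM-CCR}) implies $\bar{L}f(\bar{M})=f(\bar{M}+1)\bar{L}$ and $Lf(M)=f(M+1)L$ for any formal power series $f$; specializing to exponentials, $e^{\beta\bar{M}}\bar{L}^k = e^{-\beta k}\bar{L}^ke^{\beta\bar{M}}$ and $e^{\beta kM}L^{-1} = e^{\beta k}L^{-1}e^{\beta kM}$. To pass from the $k$-th equation to the $(k+1)$-st, I would write $L^{k+1} = L\cdot L^k$, substitute the $k=1$ and $k$-th forms, and commute $e^{\beta\bar{M}}$ past $\bar{L}^k$; the commutator factor $e^{-\beta k}$ combines with the pre-existing exponent $-\beta/2-\beta k^2/2$ to give $-\beta(k+1)^2/2$, closing the induction. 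The analogous computation for $\bar{L}^{-(k+1)} = \bar{L}^{-k}\bar{L}^{-1}$, using the companion commutation identity, handles the other family.

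The main obstacle is not conceptual but rather the careful bookkeeping of scalar prefactors --- both in transferring constants in and out of the $\widehat{\ \cdot\ }$ notation in the first stage, and in tracking how commuting a Lax operator past an exponential of an Orlov--Schulman operator supplies precisely the linear-in-$k$ factor needed for the quadratic prefactor $e^{-\beta k^2/2}$ to complete the square at each inductive step. The algebraic identity $-\beta/2-\beta k^2/2-\beta k = -\beta(k+1)^2/2$ is the arithmetic heart of why the $k=1$ equations generate the entire family.
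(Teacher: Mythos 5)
Your proposal is correct and follows essentially the same route as the paper: the full family (\ref{str-eq(k)}) is obtained by feeding the intertwining relations (\ref{Jg=gCbar}) and (\ref{gJ=Cg}) into the intertwining lemma, and the reduction to $k=1$ rests on the reordering identity $e^{\beta\bar{M}}\bar{L}=e^{-\beta}\bar{L}e^{\beta\bar{M}}$ (and its companion for $L,M$) coming from (\ref{LM-CCR}). The only cosmetic difference is that you organize the second part as an induction $k\to k+1$ with the exponent bookkeeping $-\beta/2-\beta k^2/2-\beta k=-\beta(k+1)^2/2$, whereas the paper takes the $k$-th power of (\ref{str-eq}) directly and moves all $k$ exponentials to the right at once, accumulating $e^{-\beta k(k-1)/2}$.
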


\begin{proof}
The first  part is a consequence of (\ref{Jg=gCbar}) 
and (\ref{gJ=Cg}).  Let us show the second part.  
The $k$-th power of the first equation of (\ref{str-eq}) reads 
\beqnn
  L^k = Q^ke^{-\beta k/2} (\bar{L} e^{\beta\bar{M}})^k. 
\eeqnn
The commutation equation of $\bar{L}$ and $\bar{M}$ 
in (\ref{LM-CCR}) implies that 
\beqnn
  [\bar{M},\cdots,[\bar{M},[\bar{M},\bar{L}]]\cdots] 
  \; \mbox{($k$-fold commutator)}\;  
  = (-1)^k\bar{L} 
\eeqnn
for $k = 1,2,\cdots$, so that 
\beqnn
  e^{\beta\bar{M}}\bar{L} e^{-\beta\bar{M}}
  = \bar{L} 
  + \sum_{k=1}^\infty \frac{\beta^k}{k!}
      [\bar{M},\cdots,[\bar{M},[\bar{M},\bar{L}]]\cdots] 
  = e^{-\beta}\bar{L}. 
\eeqnn
Using this relation repeatedly, one can move 
$e^{\beta\bar{M}}$'s in $(\bar{L}e^{\beta\bar{M}})^k$ 
to the rightmost position as 
\beqnn
\begin{aligned}
  (\bar{L} e^{\beta\bar{M}})^k 
  &= \bar{L} e^{\beta\bar{M}}(\bar{L} 
     e^{\beta\bar{M}})^{k-1}e^{\beta\bar{M}}\\
  &= e^{-\beta k}\bar{L}^2 
     e^{\beta\bar{M}}(\bar{L} e^{\beta\bar{M}})^{k-2}
     e^{2\beta\bar{M}}\\ 
  &= e^{-\beta k-\beta(k-1)}\bar{L}^3 
     e^{\beta\bar{M}}(e^{\beta\bar{M}}\bar{L})^{k-3}
     e^{3\beta\bar{M}}\\
  &= \cdots\\
  &= e^{-\beta k(k-1)/2}\bar{L}^k e^{\beta k\bar{M}}.
\end{aligned}
\eeqnn
Thus the first equation of (\ref{str-eq(k)}) follows. 
The second equation of (\ref{str-eq(k)}), too, can be 
derived from (\ref{str-eq}) in the same way. 
\end{proof}

Thus, in contrast with two-dimensional quantum gravity 
\cite{AvM92} and $c = 1$ string theory 
\cite{DMP93,HOP94,EK94,Takasaki95,NTT95,Takasaki96}, 
the generalized string equations contain 
the exponential terms $e^{\beta M},e^{\beta\bar{M}}$.  
These terms stem from the fermion bilinears 
$\widehat{\Lambda^{\pm k}e^{\beta k\Delta}}$ 
in (\ref{Jg=gCbar}) and (\ref{gJ=Cg}).  
Ferimion blinears of a similar form are also used 
in the study of integrable structures 
of the melting crystal model \cite{NT07,NT08}.  
A common algebraic background of these fermion bilinears 
is the quantum torus algebra (with parameter $q$) 
spanned by the infinite matrices 
\beq
  v^{(k)}_m = q^{-km/2}\Lambda^m q^{k\Delta}, \quad 
  k,m \in \ZZ. 
\eeq
In the melting crystal model \cite{NT07,NT08}, 
a central extension of this Lie algebra is realized 
by the fermion bilinears 
\beq
  V^{(k)}_m = q^{-km}\sum_{n\in\ZZ}q^{kn}{:}\psi_{-n+m}\psi^*_n{:}. 
\eeq
The Lax and Orlov-Schulman operators give (two copies of) 
yet another kind of realization by the difference operators 
\beq
  V^{(k)}_m = q^{-km/2}L^m q^{kM} 
\eeq
and 
\beq
  \bar{V}^{(k)}_m = q^{-km/2}\bar{L}^mq^{k\bar{M}}. 
\eeq
If $q = e^\beta$, the exponentials $e^{\beta M},e^{\beta\bar{M}}$ 
belong to this Lie algebra.

\section{Classical limit of generalized string equations}

\subsection{Dispersionless Toda hierarchy}

In a naive sense \cite{TT91}, the classical limit 
of the Toda hierarchy can be obtained 
by replacing the shift operator $e^{\rd_s}$ 
by a new variable $p$.  The difference operators 
$L,M,\bar{L},\bar{M}$ thus turn into Laurent series 
of $p$ of the form 
\beqnn
\begin{gathered}
  \calL = p + \sum_{n=1}^\infty u_n p^{1-n},\\
  \bar{\calL}^{-1} = \bar{u}_0 p^{-1} 
     + \sum_{n=1}^\infty \bar{u}_n p^{n-1},\\
  \calM = \sum_{n=1}^\infty nt_n\calL^n 
      + s + \sum_{n=1}^\infty v_n\calL^{-n},\\
 \bar{\calM} = - \sum_{n=1}^\infty n\bar{t}_n\bar{\calL}^{-n} 
      + s + \sum_{n=1}^\infty \bar{v}_n\bar{\calL}^n  
\end{gathered}
\eeqnn
that are referred to as the ``Lax and Orlov-Schulman functions''.  

As difference operators are replaced by Laurent series, 
commutators of difference operators turn into 
Poisson brackets by the rule 
\beqnn
  [e^{\rd_s},s] = e^{\rd_s} \;\to\;
  \{p,s\} = s. 
\eeqnn
Accordingly, Poisson brackets of functions 
of $p$ and $s$ are defined as 
\beqnn
  \{F,G\} = p\left(\frac{\rd F}{\rd p}\frac{\rd G}{\rd s} 
          - \frac{\rd F}{\rd s}\frac{\rd G}{\rd p}\right). 
\eeqnn
The Lax equations and the twisted canonical commutation relations 
are redefined with respect to the Poisson bracket as 
\beq
\begin{gathered}
  \frac{\rd \calL}{\rd t_n} = \{\calB_n,\calL\},\quad 
  \frac{\rd \calL}{\rd\bar{t}_n} = \{\bar{\calB}_n,\calL\}, \\
  \frac{\rd\bar{\calL}}{\rd t_n} = \{\calB_n,\bar{\calL}\},\quad
  \frac{\rd\bar{\calL}}{\rd\bar{t}_n} = \{\bar{\calB}_n,\bar{\calL}\}, \\
  \frac{\rd \calM}{\rd t_n} = \{\calB_n,\calM\},\quad 
  \frac{\rd \calM}{\rd\bar{t}_n} = \{\bar{\calB}_n,\calM\}, \\
  \frac{\rd\bar{\calM}}{\rd t_n} = \{\calB_n,\bar{\calM}\},\quad
  \frac{\rd\bar{\calM}}{\rd\bar{t}_n} = \{\bar{\calB}_n,\bar{\calM}\} 
\end{gathered}
\label{dToda-Laxeq}
\eeq
and 
\beq
  \{\calL,\calM\} = \calL, \quad \{\bar{\calL},\bar{\calM}\} = \bar{\calL}. 
\label{dToda-CCR}
\eeq
$\calB_n$ and $\bar{\calB}_n$ are defined 
by seemingly the same formulae 
\beqnn
  \calB_n = (\calL^n)_{\ge 0}, \quad 
  \bar{\calB}_n = (\bar{\calL}^{-n})_{<0} 
\eeqnn
as in the previous case, but the notations 
$(\quad)_{\ge 0}$ and $(\quad)_{<0}$ now stand for 
projection operators on the space of Laurent series, 
namely, 
\beqnn
  \left(\sum_{n}a_np^n\right)_{\ge 0} = \sum_{n\ge 0}a_np^n,\quad 
  \left(\sum_{n}a_np^n\right)_{<0} = \sum_{n<0}a_np^n. 
\eeqnn
These equations are fundamental constituents of 
the ``dispersionless Toda hierarchy''.

\subsection{$\hbar$-dependent Toda hierarchy and generalized string equations}

The foregoing procedure replacing $e^{\rd_s} \to p$ 
can be justified as a kind of classical limit 
in an $\hbar$-dependent formulation 
of the Toda hierarchy \cite{TT93}.  

In the $\hbar$-dependent formulation, 
$e^{\rd_s}$ is replaced by $e^{\hbar\rd_s}$.  
The ``Planck constant'' $\hbar$ thus plays 
the role of lattice spacing.  
The Lax and Orlov-Schulman operators are expanded 
in powers of $e^{\hbar\rd_s}$ as 
\beqnn
\begin{gathered}
  L = e^{\hbar\rd_s} + \sum_{n=1}^\infty u_ne^{(1-n)\hbar\rd_s},\\
  \bar{L}^{-1} = \bar{u}_0e^{-\hbar\rd_s} 
     + \sum_{n=1}^\infty \bar{u}_n e^{(n-1)\hbar\rd_s},\\
  M = \sum_{n=1}^\infty nt_nL^n 
      + s + \sum_{n=1}^\infty v_nL^{-n},\\
 \bar{M} = - \sum_{n=1}^\infty n\bar{t}_n\bar{L}^{-n} 
      + s + \sum_{n=1}^\infty \bar{v}_n\bar{L}^n. 
\end{gathered}
\eeqnn
The Lax equations and the twisted canonical commutation relations 
take an $\hbar$-dependent form as 
\beq
\begin{gathered}
  \hbar\frac{\rd L}{\rd t_n} = [B_n,L],\quad 
  \hbar\frac{\rd L}{\rd\bar{t}_n} = [\bar{B}_n,L], \\
  \hbar\frac{\rd\bar{L}}{\rd t_n} = [B_n,\bar{L}],\quad
  \hbar\frac{\rd\bar{L}}{\rd\bar{t}} = [\bar{B}_n,\bar{L}],\\
  \hbar\frac{\rd M}{\rd t_n} = [B_n,M],\quad
  \hbar\frac{\rd M}{\rd\bar{t}_n} = [\bar{B}_n,M], \\
  \hbar\frac{\rd\bar{M}}{\rd t_n} = [B_n,\bar{M}],\quad
  \hbar\frac{\rd\bar{M}}{\rd\bar{t}_n} = [\bar{B}_n,\bar{M}]
\end{gathered}
\eeq
and 
\beq
  [L,M] = \hbar L, \quad [\bar{L},\bar{M}] = \hbar\bar{L}. 
\label{hbar-LM-CCR}
\eeq

If the coefficient $u_n,\bar{u}_n,v_n,\bar{v}_n$ 
(which are functions of $\hbar,s,\bst,\bar{\bst}$) 
have a smooth classical limit as 
\beq
  u_n^{(0)} = \lim_{\hbar\to 0}u_n,\quad 
  \bar{u}_n^{(0)} = \lim_{\hbar\to 0}\bar{u}_n,\quad 
  v_n^{(0)} = \lim_{\hbar\to 0}v_n,\quad 
  \bar{v}_n^{(0)} = \lim_{\hbar\to 0}\bar{v}_n, 
\label{uv-cl-condition}
\eeq
one can define the Lax and Orlov-Schulman functions as 
\beqnn
\begin{gathered}
  \calL  = p + \sum_{n=1}^\infty u_n^{(0)} p^{1-n},\\
  \bar{\calL}^{-1} = \bar{u}_0^{(0)} p^{-1} 
     + \sum_{n=1}^\infty \bar{u}_n^{(0)} p^{n-1},\\
  \calM = \sum_{n=1}^\infty nt_n\calL^{n} 
      + s + \sum_{n=1}^\infty v^{(0)}_n\calL^{-n},\\
  \bar{\calM} = - \sum_{n=1}^\infty n\bar{t}_n\bar{\calL}^{-n} 
      + s + \sum_{n=1}^\infty \bar{v}_n^{(0)}\bar{\calL}^{n}. 
\end{gathered}
\eeqnn
These Lax and Orlov-Schulman functions satisfy 
the Lax equations (\ref{dToda-Laxeq}) and 
the twisted canonical Poisson relations (\ref{dToda-CCR}). 

In this $\hbar$-dependent formulation, 
generalized string equations (\ref{C(L,M)=Cbar(Lbar,Mbar)}) 
are modified as 
\beq
  C(L,\hbar^{-1}M) = \bar{C}(\bar{L},\hbar^{-1}\bar{M}), 
\label{hbar-C(L,M)=Cbar(Lbar,Mbar)}
\eeq
namely, $M$ and $\bar{M}$ are multiplied by $\hbar^{-1}$ 
\cite{NTT95,Takasaki96}.  
Let us explain the underlying mechanism briefly.  
The Lax and Orlov-Schulman operators are connected 
with the matrices $\Lambda$ and $\Delta$ 
by the so called ``dressing operators'' \cite{TT93}.  
The twisted canonical commutation relations 
are thereby derived from the commutation relation 
\beqnn
  [\Lambda,\Delta] = \Lambda
\eeqnn
of these matrices.  In the $\hbar$-dependent formulation, 
they take the form
\beqnn
  [L,\hbar^{-1}M] = \hbar^{-1}M, \quad
  [\bar{L},\hbar^{-1}\bar{M}] = \hbar^{-1}\bar{M}, 
\eeqnn
which are nothing but (\ref{hbar-LM-CCR}).  
Thus it is $\hbar^{-1}M$ and $\hbar^{-1}\bar{M}$ 
rather than $M$ and $\bar{M}$ that correspond to $\Delta$ 
and show up in generalized string equations.

\subsection{Classical limit of generalized string equations}

Let us turn to the case of double Hurwitz numbers. 
To derive a classical limit, we have to introduce $\hbar$ therein.  
At least formally, this can be done 
by rescaling the space-time variables as 
\beq
  s \to \hbar^{-1}s,\quad 
  \bst \to \hbar^{-1}\bst,\quad
  \bar{\bst} \to \hbar^{-1}\bar{\bst}. 
\label{rescaling-sttbar}
\eeq
The $\hbar$-independent Toda hierarchy is thereby 
converted to the $\hbar$-dependent form. Actually, 
it is rare that the rescaled Lax and Orlov-Schulman operators 
satisfy the condition (\ref{uv-cl-condition}).  
To achieve a meaningful (and nontrivial) classical limit, 
one should start from a carefully chosen 
{\it $\hbar$-dependent} solution 
of the {\it $\hbar$-independent} Toda hierarchy.  

In the language of the tau function \cite{TT93}, 
a meaningful classical limit is obtained 
from an $\hbar$-dependent tau function 
$\tau(\hbar,s,\bst,\bar{\bst})$ 
such that the rescaled tau function 
\beqnn
  \tau_\hbar(s,\bst,\bar{\bst}) 
  = \tau(\hbar,\hbar^{-1}s,\hbar^{-1}\bst,\hbar^{-1}\bar{\bst}) 
\eeqnn
behaves as 
\beq
  \log\tau_\hbar(s,\bst,\bar{\bst}) 
  = \hbar^{-2}\calF(s,\bst,\bar{\bst}) + O(\hbar^{-1}). 
\label{tau-qc-condition}
\eeq
$\calF = \calF(s,\bst,\bar{\bst})$ is called 
the ``free energy'' because of its relation 
to random matrices and topological field theories 
\cite{Dubrovin96,Krichever94}.  
If the rescaled tau function has this asymptotic form, 
the associated ``wave functions'' 
\beqnn
\begin{gathered}
  \Psi_\hbar(s,\bst,\bar{\bst},z) 
  = \frac{\tau_\hbar(s,\bst-\hbar[z^{-1}],\bar{\bst})}
         {\tau_\hbar(s,\bst,\bar{\bst})}
    z^{\hbar^{-1}s}e^{\hbar^{-1}\xi(\bst,z)},\\
  \bar{\Psi}_\hbar(s,\bst,\bar{\bst},z) 
  = \frac{\tau_\hbar(s+\hbar,\bst,\bar{\bst}-\hbar[z])}
         {\tau_\hbar(s,\bst,\bar{\bst})}
    z^{\hbar^{-1}s}e^{\hbar^{-1}\xi(\bar{\bst},z^{-1})},\\
  [z] = \left(z,z^2/2,\cdots,z^k/k,\cdots\right),\quad 
  \xi(\bst,z) = \sum_{k=1}^\infty t_kz^k 
\end{gathered}
\eeqnn
have the ``WKB'' form 
\beq
\begin{gathered}
  \Psi_\hbar(s,\bst,\bar{\bst},z) 
  = \exp\left(\hbar^{-1}S(s,\bst,\bar{\bst},z) + O(1)\right),\\
  \bar{\Psi}_\hbar(s,\bst,\bar{\bst},z) 
  = \exp\left(\hbar^{-1}\bar{S}(s,\bst,\bar{\bst},z) + O(1)\right) 
\end{gathered}
\eeq
and satisfy a set of auxiliary linear equations. 
The phase functions $S(s,\bst,\bar{\bst},z)$ 
and $\bar{S}(s,\bst,\bar{\bst},z)$ satisfy 
the associated Hamilton-Jacobi equations, 
which can be converted to the dispersionless Lax equations 
(\ref{dToda-Laxeq}) and the Poisson relations (\ref{dToda-CCR}) 
(see the review \cite{TT95} for details).  

An appropriate $\hbar$-dependent reformulation 
of the tau function (\ref{Hurwitz-tau}) 
of double Hurwitz numbers can be found 
by the following heuristic consideration.  
As we move into the $\hbar$-dependent formulation, 
the generalized string equations (\ref{str-eq}) are modified as 
\beqnn
    L = Qe^{-\beta/2}\bar{L} e^{\beta\hbar^{-1}\bar{M}},\quad 
  \bar{L}^{-1} = Qe^{\beta/2}L^{-1}e^{\beta\hbar^{-1}M}. 
\eeqnn
Obviously, these equations do not have a limit as $\hbar \to 0$.  
If, however, the parameter $\beta$ is simultaneously 
rescaled as 
\beq
  \beta \to \hbar\beta, 
\label{rescaling-beta}
\eeq
the generalized string equations are further modified as 
\beq
    L = Qe^{-\hbar\beta/2}\bar{L} e^{\beta\bar{M}},\quad 
  \bar{L}^{-1} = Qe^{\hbar\beta/2}L^{-1}e^{\beta M},  
\label{rescaled-str-eq}
\eeq
and have a meaningful classical limit of the form 
\beq
    \calL = Q\bar{\calL} e^{\beta\bar{\calM}},\quad 
  \bar{\calL}^{-1} = Q\calL^{-1}e^{\beta\calM}. 
\label{cl-str-eq}
\eeq

\subsection{Existence of $\hbar$-expansion}

To justify the foregoing heuristic derivation 
of the classical limit (\ref{cl-str-eq}) 
of the generalized string equations, 
let us show that the tau function (\ref{Hurwitz-tau}) 
with $\beta$ rescaled as (\ref{rescaling-beta}) 
does satisfy the condition (\ref{tau-qc-condition}).   

Recall the expression (\ref{Hurwitz-tau-Z}) of the tau function.  
After rescaling $s,\bst,\bar{\bst}$ and $\beta$ 
as (\ref{rescaling-sttbar}) and (\ref{rescaling-beta}), 
this expression is modified as 
\beq
  \tau_\hbar(s,\bst,\bar{\bst}) 
  = e^{\hbar^{-2}\beta s(s+\hbar)(2s+\hbar)/12}
    Q^{\hbar^{-2}s(s+\hbar)/2}
    Z_{\hbar\beta,e^{\beta(s+\hbar/2)}Q}[\hbar^{-1}\bst,\hbar^{-1}\bar{\bst}]. 
\label{hbar-Hurwitz-tau}
\eeq
Therefore it is sufficient to show that 
the logarithm of $Z_{\hbar\beta,Q}[\hbar^{-1}\bst,\hbar^{-1}\bar{\bst}]$ 
has an $\hbar$-expansion of the form 
\beq
  \log Z_{\hbar\beta,Q}[\hbar^{-1}\bst,\hbar^{-1}\bar{\bst}] 
  = \hbar^{-2}F_0 + F_1 + \hbar^2 F_2 + \cdots 
    + \hbar^{2n}F_n + \cdots, 
\label{logZ-expansion}
\eeq
where $F_0,F_1,\cdots$ are analytic functions 
of $(\beta,Q,\bst,\bar{\bst})$ in a common domain. 
The free energy is then given by 
\beq
  \calF = \frac{\beta s^3}{6} + \frac{s^2\log Q}{2} 
          + F_0(\beta,e^{\beta s}Q,\bst,\bar{\bst}). 
\eeq

(\ref{logZ-expansion}) is a generalization of the well known 
topological expansion for simple Hurwitz numbers. 
It is common in the literature that this kind of expansion 
is explained by a combination of topological and combinatorial 
consideration (see, e.g., Section 4.2 of Mironov and Morozov \cite{MM08}, 
Section 2.1 of Bouchard and Mari\~{n}o \cite{BM09} and 
Section 2.2 of Borot et al. \cite{BEMS09}). 
We take another approach based on the cut-and-join operator 
(\ref{boson-M_0}). 

\begin{theorem}
$\log Z_{\hbar\beta,Q}[\hbar^{-1}\bst,\hbar^{-1}\bar{\bst}]$ 
has an $\hbar$-expansion of the form (\ref{logZ-expansion}). 
\end{theorem}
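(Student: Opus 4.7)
The plan is to derive the $\hbar$-expansion from the cut-and-join formula (\ref{Z[t,tbar]-M0}) by tracking the scaling of the two pieces of $M_0$. After the rescaling (\ref{rescaling-sttbar}), (\ref{rescaling-beta}), the tau function becomes
\[
  Z_\hbar := Z_{\hbar\beta,Q}[\hbar^{-1}\bst,\hbar^{-1}\bar{\bst}] = e^{\hbar\beta\widetilde{M}_0}\exp\!\Bigl(-\hbar^{-2}\sum_{k\ge 1}Q^kkt_k\bar{t}_k\Bigr),
\]
where $\widetilde{M}_0$ is the cut-and-join operator conjugated by the rescaling of $\bst$. The key observation is that $M_0$ splits as $M_0 = M_0^+ + M_0^-$ with $M_0^+ = \tfrac12\sum klt_kt_l\,\rd_{t_{k+l}}$ of weighted $\bst$-degree $+1$ and $M_0^- = \tfrac12\sum(k+l)t_{k+l}\,\rd_{t_k}\rd_{t_l}$ of weighted degree $-1$; under $t_k\to\hbar^{-1}t_k$ these scale as $M_0^\pm\to\hbar^{\mp 1}M_0^\pm$, so $\hbar\beta\widetilde{M}_0 = \beta M_0^+ + \hbar^2\beta M_0^-$. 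Only even powers of $\hbar$ appear, consistent with the shape of (\ref{logZ-expansion}).

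Setting $W_\hbar := \log Z_\hbar$ and using $\rd_{t_k}e^W = e^W\rd_{t_k}W$ together with $\rd_{t_k}\rd_{t_l}e^W = e^W\bigl(\rd_{t_k}\rd_{t_l}W + (\rd_{t_k}W)(\rd_{t_l}W)\bigr)$, I obtain the nonlinear evolution equation
\[
  \frac{\rd W_\hbar}{\rd\beta} = M_0^+W_\hbar + \frac{\hbar^2}{2}\sum_{k,l\ge 1}(k+l)t_{k+l}\left(\frac{\rd^2 W_\hbar}{\rd t_k\rd t_l} + \frac{\rd W_\hbar}{\rd t_k}\frac{\rd W_\hbar}{\rd t_l}\right),
\]
with initial datum $W_\hbar|_{\beta=0} = -\hbar^{-2}\sum Q^kkt_k\bar{t}_k$ concentrated in the $\hbar^{-2}$ order. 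Substituting the ansatz $W_\hbar = \sum_{n\ge 0}\hbar^{2n-2}F_n$ and matching powers of $\hbar^2$ yields a triangular system: $F_0$ satisfies the dispersionless Hamilton--Jacobi equation $\rd_\beta F_0 = M_0^+F_0 + \tfrac12\sum(k+l)t_{k+l}(\rd_{t_k}F_0)(\rd_{t_l}F_0)$ with $F_0|_{\beta=0} = -\sum Q^kkt_k\bar{t}_k$, and for each $n\ge 1$, $F_n$ satisfies a linear first-order (in $\beta$) transport equation along the characteristic field determined by $F_0$, with vanishing initial datum and inhomogeneity polynomial in the already-constructed $F_0,\ldots,F_{n-1}$ and their $\bst$-derivatives. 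Integrating $F_0$ by the method of characteristics, exactly the construction carried out in Section~6 for the classical string equations (\ref{cl-str-eq}), and then solving each linear equation inductively produces every $F_n$ as a formal series in $(\beta,Q,\bst,\bar{\bst})$.

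The main obstacle is promoting this formal construction to analyticity on a \emph{common} domain, since the hierarchy lives on an infinite-dimensional time space. To address this, I would exploit that $M_0^\pm$ preserve the weighted grading $\deg t_k = \deg\bar{t}_k = k$, so each $F_n$ decomposes into homogeneous components $F_n = \sum_d F_n^{(d)}$ where $F_n^{(d)}$ is a polynomial in $(Q,\bst,\bar{\bst})$ of weighted degree $d$ and depends on $\beta$ as an entire function. Since the weighted-degree-$d$ sector of $Z_\hbar$ involves only finitely many partitions with $|\lambda|\le d$, uniform bounds on $F_n^{(d)}$ yield a common polydisk of the form $\{|Q|<R,\ |t_k|<\rho^k,\ |\bar{t}_k|<\rho^k\}$ on which every $F_n$ is analytic in $(\beta,Q,\bst,\bar{\bst})$.
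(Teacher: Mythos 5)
Your proposal is correct and follows essentially the same route as the paper: both start from the cut-and-join representation (\ref{Z[t,tbar]-M0}), observe that the rescaled operator splits as $\hbar\beta M_0(\hbar)=\beta M_0^{+}+\hbar^2\beta M_0^{-}$, derive the same Riccati-type evolution equation for $\log Z_\hbar$ in $\beta$, and solve the resulting triangular system $\{F_n\}$ recursively, with the $n=0$ equation nonlinear and all higher ones linear in $F_n$. Your closing paragraph on the weighted grading and a common polydisk of analyticity is a useful elaboration of a point the paper dispatches with a one-line parenthetical, but it does not change the substance of the argument.
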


\begin{proof}
Let $F = F(\hbar,\beta,Q,\bst,\bar{\bst})$ 
denote the left hand side of (\ref{logZ-expansion}) 
multiplied by $\hbar^2$.  By (\ref{Z[t,tbar]-M0}), 
$e^{\hbar^{-2}F}$ can be expressed as 
\beqnn
  e^{\hbar^{-2}F} = e^{\hbar\beta M_0(\hbar)}
      \exp\left(- \hbar^{-2}\sum_{k=1}^\infty Q^k kt_k\bar{t}_k\right), 
\eeqnn
where $M_0(\hbar)$ denotes the rescaled cut-and-join operator 
\beqnn
  M_0(\hbar) = \frac{1}{2}\sum_{j,k=1}^\infty 
      \left(\hbar^{-1}klt_kt_l\frac{\rd}{\rd t_{k+l}} 
         + \hbar(k+l)t_{k+l}\frac{\rd^2}{\rd t_k\rd t_l}\right). 
\eeqnn
Therefore $e^{\hbar^{-2}F}$ satisfies the differential equation 
\beqnn
  \frac{\rd e^{\hbar^{-2}F}}{\rd \beta} 
  = \hbar M_0(\hbar)e^{\hbar^{-2}F}
\eeqnn
with respect to $\beta$.  This equation can be 
further converted to the differential equation 
\beqnn
  \frac{\rd F}{\rd\beta} 
  = \frac{1}{2}\sum_{j,k=1}^\infty klt_kt_l\frac{\rd F}{\rd t_{k+l}} 
  + \frac{1}{2}\sum_{k,l=1}^\infty (k+l)t_{k+l}
       \left(\hbar^2\frac{\rd^2F}{\rd t_k\rd t_l} 
           + \frac{\rd F}{\rd t_k}\frac{\rd F}{\rd t_l} \right) 
\eeqnn
for $F$.  This equation is supplemented by the the initial condition 
\beqnn
  F|_{\beta=0} = - \sum_{k=1}^\infty Q^kkt_k\bar{t}_k. 
\eeqnn
We now seek a solution of this initial value problem 
in the form of a (formal) power series of $\hbar^2$: 
\beqnn
  F = F_0 + \hbar^2 F_1 + \cdots + \hbar^{2n}F_n + \cdots. 
\eeqnn
This reduces to solving the differential equations 
\begin{multline}
  \frac{\rd F_n}{\rd\beta} 
  = \frac{1}{2}\sum_{k,l=1}^\infty klt_kt_l\frac{\rd F_n}{\rd t_{k+l}} 
    + \frac{1}{2}\sum_{k,l=1}^\infty(k+l)t_{k+l} 
       \frac{\rd^2F_{n-1}}{\rd t_k\rd t_l} \\
  \mbox{} 
    + \frac{1}{2}\sum_{k,l=1}^\infty(k+l)t_{k+l}
        \sum_{m=0}^n\frac{\rd F_m}{\rd t_k}\frac{\rd F_{n-m}}{\rd t_l} 
\label{F_n-diffeq}
\end{multline}
for $n = 0,1,2,\cdots$ under the initial conditions 
\beq
  F_n|_{\beta=0} = - \delta_{n0}\sum_{k=1}^\infty Q^kkt_k\bar{t}_k. 
\label{F_n-beta=0}
\eeq
$F_n$'s are thereby recursively determined, and become 
analytic functions of $(\beta,Q,\bst,\bar{\bst})$ 
in a common domain of definition (because 
the differential equations other than the first one 
for $n = 0$ are linear with respect to $F_n$).  
Since the initial value problem for $F$ has a unique solution, 
the power series solution $F = F_0 + \hbar^2F_1 + \cdots$ 
should coincide with the left hand side 
of (\ref{logZ-expansion}).  
\end{proof}

One can thus confirm the expected asymptotic form 
(\ref{tau-qc-condition}) of the the rescaled tau function 
(\ref{hbar-Hurwitz-tau}).   Let us stress that 
this is also a proof for the case of simple Hurwitz numbers. 
To consider that case, one has only to set 
$\bar{t}_k = - \delta_{k1}$  
in the initial condition (\ref{F_n-beta=0}).  
Let us also point out that the main part $F_0$ 
of the free energy is determined by the $n = 0$ part 
of (\ref{F_n-diffeq}) and (\ref{F_n-beta=0}), namely, 
\beq
  \frac{\rd F_0}{\rd\beta} 
  = \frac{1}{2}\sum_{k,l=1}^\infty klt_kt_l\frac{\rd F_0}{\rd t_{k+l}} 
    + \frac{1}{2}\sum_{k,l=1}^\infty(k+l)t_{k+l}
        \frac{\rd F_0}{\rd t_k}\frac{\rd F_0}{\rd t_l} 
\eeq
and 
\beq
  F_0|_{\beta=0} = - \sum_{k=1}^\infty Q^kkt_k\bar{t}_k. 
\eeq
It will be interesting to apply the diagramatic technique 
of Mironov and Morozov \cite{MM08} to these equations.

\section{Solution of classical limit of generalized string equations}

\subsection{Comparison with generalized string equations of $c = 1$ string theory}

Our goal in this section is to solve 
the generalized string equations (\ref{cl-str-eq}) 
and to derive some implications thereof. 
To this end, it is instructive to compare these equations 
with the generalized string equations of $c = 1$ string theory
\cite{DMP93,HOP94,EK94,Takasaki95}.  

In the classical limit, the generalized string equations 
of $c = 1$ string theory read
\beq
  \calL = \bar{\calL}\bar{\calM},\quad 
  \bar{\calL}^{-1} = \calL^{-1}\calM. 
\label{c=1-str-eq}
\eeq
Let us mention that the same equations play a central role 
in a problem of complex analysis and its applications 
to interface dynamics \cite{WZ00,MWWZ00,Zabrodin01}.  
Actually, it is the equation 
\beq
  \{\calL,\bar{\calL}^{-1}\} = 1 
\label{c=1-str-eq2}
\eeq
rather than (\ref{c=1-str-eq}) that is referred to 
as a ``string equation'' in these applications. 
Note that one can readily derive (\ref{c=1-str-eq2}) 
from (\ref{c=1-str-eq}).   In the same sense, 
one can derive the equation 
\beq
  \{\log\calL,\log\bar{\calL}^{-1}\} = \beta 
\label{cl-str-eq2}
\eeq
from (\ref{cl-str-eq}) as a counterpart of (\ref{c=1-str-eq}) 
for double Hurwitz numbers.  

(\ref{c=1-str-eq2}) and (\ref{cl-str-eq2}) resemble 
the string equation (or the Douglas equation) 
\beq
  [Q,P] = 1 
\label{Douglas-eq}
\eeq
and its classical limit 
\beq
  \{Q,P\} = 1
\label{cl-Douglas-eq}
\eeq
in two-dimensional quantum gravity \cite{Douglas90,Schwarz91,KS91}. 
$Q$ and $P$ in (\ref{Douglas-eq}) are one-dimensional 
differential operators of the form 
\beqnn
  Q = \rd_x^n + u_2\rd_x^{n-2} + \cdots + u_n,\quad
  P = \rd_x^m + v_2\rd_x^{m-2} + \cdots + v_n. 
\eeqnn
In the classical limit, $\rd_x$ is replaced by 
a variable $p$ with the Poisson bracket 
\beqnn
  \{p,x\} = 1, 
\eeqnn
and $Q$ and $P$ are polynomials of the form 
\beqnn
  \calQ = p^n + u_2p^{n-2} + \cdots + u_n,\quad
  \calP = p^m + v_2p^{m-2} + \cdots + v_n. 
\eeqnn
In this setting, $\calQ$ and $\calP$ may be thought of 
as coordinates of the spectral curve (parametrized by $p$) 
in the sense of Eynard and Orantin \cite{EO07}.  
Namely, when $x$ and other deformation variables 
(time variables of the underlying KP hierarchy) 
are fixed to special values, $\calQ$ and $\calP$ satisfy 
a defining equation $f(X,Y) = 0$ of the spectral curve as 
\beqnn
  f(\calQ,\calP) = 0. 
\eeqnn

Although the KP and Toda hierarchies are different in nature, 
the last remark seems to suggest that one may think of 
an equation of the form 
\beqnn
  f(\calL,\bar{\calL}^{-1}) = 0 
\eeqnn
as the spectral curve in the present setting.  
This observation is partly supported by the fact 
that such a curve is derived as the spectral curve 
in the random matrix approach to $c = 1$ string theory 
and interface dynamics \cite{AKK03,TBAZW04a,TBAZW04b}. 

Bearing these remarks in mind, let us turn to the issue 
of solving the generalized string equations (\ref{cl-str-eq}). 
These equations, like (\ref{c=1-str-eq}), 
are a kind of ``nonlinear Riemann-Hilbert problems''.  
One can use the method developed for solving 
(\ref{c=1-str-eq}) \cite{Takasaki95} 
to construct a solution of (\ref{cl-str-eq}) 
as power series of $\bst$ and $\bar{\bst}$. 
As it turns out, (\ref{cl-str-eq}) can be treated 
in much the same way apart from technical complications.

\subsection{Decomposition of equations}

Let us convert (\ref{cl-str-eq}) to the logarithmic form 
\beq
\begin{gathered}
  \log(\calL p^{-1}) 
    = \log Q - \log(\bar{\calL}^{-1}p) + \beta\bar{\calM},\\
  \log(\bar{\calL}^{-1}p) 
    = \log Q - \log(\calL p^{-1}) + \beta\calM. 
\end{gathered}
\label{log-str-eq}
\eeq
Since $\calL p^{-1}$ and $\bar{\calL}^{-1}p$ 
are Laurent series of the form 
\beqnn
  \calL p^{-1} = 1 + \sum_{n=1}^\infty u_np^{-n},\quad 
  \bar{\calL}^{-1}p = \bar{u}_0 + \sum_{n=1}^\infty \bar{u}_np^n 
\eeqnn
with nonzero leading terms, one can expand the logarithm as 
\beqnn
  \log(\calL p^{-1}) = \sum_{n=1}^\infty \alpha_np^{-n},\quad 
  \log(\bar{\calL}^{-1}p) 
    = \log\bar{u}_0 + \sum_{n=1}^\infty \bar{\alpha}_np^n, 
\eeqnn
where 
\beqnn
\begin{gathered}
  \alpha_n = u_n + \mbox{(polynomial of $u_1,\cdots,u_{n-1}$)},\\
  \bar{\alpha}_n 
    = \bar{u}_0^{-1}\bar{u}_n
      + \mbox{(polynomial of 
        $\bar{u}_0^{-1}\bar{u}_1,\cdots,\bar{u}_0^{-1}\bar{u}_{n-1}$)}. 
\end{gathered}
\eeqnn

We now substitute 
\beqnn
\begin{gathered}
  \bar{\calM} = - \sum_{k=1}^\infty k\bar{t}_k\bar{\calL}^{-k} 
          + s + \sum_{n=1}^\infty\bar{v}_n\bar{\calL}^n,\\
  \calM = \sum_{k=1}^\infty kt_k\calL^k 
          + s + \sum_{n=1}^\infty v_n\calL^{-n}
\end{gathered}
\eeqnn
in (\ref{log-str-eq}) and expand both hand sides in powers of $p$. 
This leads to an infinite set of equations for the coefficients 
$u_n,\bar{u}_n,v_n,\bar{v}_n$ of $\calL,\calM,\bar{\calL},\bar{\calM}$ 
as follows.    

Equating the coefficients of $p^{-n}$, $n = 0,1,\cdots$, 
in both hand sides of the first equation of (\ref{log-str-eq}) 
gives the equations 
\begin{gather}
  0 = \log Q - \log\bar{u}_0 + \beta s 
      - \beta\sum_{k=1}^\infty k\bar{t}_k(\bar{\calL}^{-k})_0, 
  \label{log-str-eq(0bar)}\\
  \alpha_n = - \beta\sum_{k=n}^\infty k\bar{t}_k(\bar{\calL}^{-k})_{-n}, 
        \quad n = 1,2,\cdots,
  \label{log-str-eq(nbar)} 
\end{gather}
where $(\bar{\calL}^{-k})_{-n}$ stands for 
the coefficient of $p^{-n}$ in $\bar{\calL}^{-k}$.  
In the same way, the coefficients of $p^n$, $n = 0,1,2,\cdots$, 
in the second equation of (\ref{log-str-eq}) give the equations 
\begin{align}
  \log\bar{u}_0 &= \log Q + \beta s 
      + \beta\sum_{k=1}^\infty kt_k(\calL^k)_0, 
  \label{log-str-eq(0)}\\
  \bar{\alpha}_n &= \beta\sum_{k=n}^\infty kt_k(\calL^k)_n, 
               \quad n = 1,2,\cdots, 
  \label{log-str-eq(n)}
\end{align}
where $(\calL^k)_n$ denotes the coefficient of $p^n$ in $\calL^k$.  
Since 
\beqnn
  (\calL^k)_n = (\bar{\calL}^{-k})_{-n} = 0 \quad \mbox{for $k<n$}, 
\eeqnn
the range of $k$ in the sums of (\ref{log-str-eq(nbar)}) 
and (\ref{log-str-eq(n)}) is limited to $k \ge n$.  
Note that one can use the formal residue notation 
\beqnn
  \res\left(\sum_n a_np^ndp\right) = a_{-1} 
\eeqnn
to express $\left(\calL^k\right)_n$ and 
$\left(\bar{\calL}^{-k}\right)_{-n}$ as 
\beqnn
  (\calL^k)_n 
    = \res\left(\calL^kp^{-n}d\log p\right),\quad 
  (\bar{\calL}^{-k})_{-n} 
    = \res\left(\bar{\calL}^{-k}p^nd\log p\right). 
\eeqnn
Such an expression turns out to be useful in the subsequent 
consideration.   Since $v_n$'s and $\bar{v}_n$'s are absent, 
(\ref{log-str-eq(0bar)}) -- (\ref{log-str-eq(n)}) 
are equations for $u_n$'s and $\bar{u}_b$'s only.  

The remaining part of (\ref{log-str-eq}) determine 
$v_n$'s and $\bar{v}_n$'s.  To see this, 
it is more convenient to expand (\ref{log-str-eq}) 
in powers of $\calL$ and $\bar{\calL}$ rather than of $p$.  
Extracting the coefficients of $\bar{\calL}^n$ 
from the first equation of (\ref{cl-str-eq}) and 
those of $\calL^{-n}$ from the second equation 
yields the equations 
\begin{align}
  0 &= - \res\left(\log(\bar{\calL}^{-1}p)
         \bar{\calL}^{-n}d\log\bar{\calL}\right)
       + \beta\bar{v}_n, 
  \label{log-str-eq(vbar_n)}\\
  0 &= - \res\left(\log(\calL p^{-1})\calL^nd\log\calL\right) 
       + \beta v_n 
  \label{log-str-eq(v_n)}
\end{align}
for  $n = 1,2,\cdots$.  Thus $v_n$'s and $\bar{v}_n$'s 
are determined by $\calL$ and $\bar{\calL}$ as 
\beqnn
\begin{gathered}
  v_n = \beta^{-1}\res\left(\log(\calL p^{-1})\calL^nd\log\calL\right),\\
  \bar{v}_n = \beta^{-1}\res\left(\log(\bar{\calL}^{-1}p)
              \bar{\calL}^{-n}d\log\bar{\calL}\right). 
\end{gathered}
\eeqnn

(\ref{log-str-eq}) can be thus decomposed into the infinite set 
of equations (\ref{log-str-eq(0bar)}) -- (\ref{log-str-eq(n)}), 
(\ref{log-str-eq(vbar_n)}) and (\ref{log-str-eq(v_n)}).   
The next task is to show that they do have a solution.

\subsection{Solution of equations}

As we observed above, one can think of 
(\ref{log-str-eq(0bar)}) -- (\ref{log-str-eq(n)}) 
as equations for $u_n$'s and $\bar{u}_n$'s.   
We want to construct these functions as power series 
of $(\bst$ and $\bar{\bst}$ with coefficients depending on $s$. 
If $u_k$'s and $\bar{u}_k$ are thus constructed, 
$v_k$ and $\bar{v}_k$'s are determined by 
(\ref{log-str-eq(vbar_n)}) and (\ref{log-str-eq(v_n)}).  

Let us first examine (\ref{log-str-eq(0bar)}) 
and (\ref{log-str-eq(0)}).  Since 
\beqnn
\begin{gathered}
  \alpha_n = u_n  
    + \mbox{(monomials of higher degrees in $u_1,\cdots,u_{n-1}$)},\\
  \bar{\alpha}_n = \bar{u}_0^{-1}\bar{u}_n 
    + \mbox{(monomials of higher degrees in 
       $\bar{u}_0^{-1}\bar{u}_,\cdots,\bar{u}_0^{-1}\bar{u}_{n-1}$)}, 
\end{gathered}
\eeqnn
$u_n$ and $\bar{u}_n$, $n = 1,2,\cdots$, show up 
on the left hand side of these equations linearly.  
The right hand side consists of terms 
that are multiplied by $t_k$'s and $\bar{t}_k$'s.  
Therefore these equations yield a huge system of 
recursion relations for the coefficients of 
power series expansion of $u_n$'s and $\bar{u}_n$'s. 

Note that $\bar{u}_0$, which remains to be determined, 
is contained in the coefficients of the power series 
expansion of $u_n$ and $\bar{u}_n$.  
We need another equation to determine $\bar{u}_0$.  
Actually, there are two equations 
(\ref{log-str-eq(0bar)}) and (\ref{log-str-eq(0)}) 
rather than just one.   This puzzle is resolved as follows 
\footnote{A similar result holds for the generalized 
string equations (\ref{c=1-str-eq}) of $c = 1$ string theory.  
This fills a logical gap left in our previous paper 
\cite{Takasaki95}.}. 

\begin{lemma}
If (\ref{log-str-eq(nbar)}) and (\ref{log-str-eq(n)}) 
are satisfied, then (\ref{log-str-eq(0bar)}) and 
(\ref{log-str-eq(0)}) are equivalent, and reduces 
to the equation
\beq
  \log\bar{u}_0 = \log Q + \beta s 
     + \sum_{k=1}^\infty k\alpha_k\bar{\alpha}_k. 
\label{log-str-eq(0reduced)}
\eeq
\end{lemma}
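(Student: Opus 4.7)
The plan is to prove that, under (\ref{log-str-eq(nbar)}) and (\ref{log-str-eq(n)}), both (\ref{log-str-eq(0bar)}) and (\ref{log-str-eq(0)}) collapse to (\ref{log-str-eq(0reduced)}) via a residue identity applied to $\log(\calL p^{-1})$ and $\log(\bar{\calL}^{-1}p)$. The crucial observation is that the constant-in-$p$ coefficients $(\calL^k)_0$ and $(\bar{\calL}^{-k})_0$, which appear in (\ref{log-str-eq(0)}) and (\ref{log-str-eq(0bar)}), can be re-expressed as residues that land, after integration by parts, on the coefficients $\alpha_n$ and $\bar{\alpha}_n$ of the logarithmic expansions. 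Once this is in place, the sums over $k$ in (\ref{log-str-eq(0)}) and (\ref{log-str-eq(0bar)}) can be swapped with sums over $n$, and the inner sums are exactly those that evaluate to $\bar{\alpha}_n$ and $-\alpha_n$ by (\ref{log-str-eq(n)}) and (\ref{log-str-eq(nbar)}), respectively.

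First I would record the basic residue identity $\res(\calL^k d\log\calL) = \res(k^{-1} d(\calL^k)) = 0$ for $k\ge 1$ (and the analogous identity for $\bar{\calL}^{-k}$), which holds because $\calL^k d\log\calL$ is an exact differential. Writing $d\log p = d\log\calL - d\log(\calL p^{-1})$ and using $\log(\calL p^{-1}) = \sum_{n\ge 1}\alpha_n p^{-n}$, I would then obtain
\beqnn
  (\calL^k)_0 = \res\bigl(\calL^k d\log p\bigr)
  = -\res\bigl(\calL^k d\log(\calL p^{-1})\bigr)
  = \sum_{n=1}^{k} n\alpha_n(\calL^k)_n,
\eeqnn
where the upper cutoff $n\le k$ arises because $\calL^k = p^k + \cdots$ has no positive-$p$ coefficients beyond $p^k$. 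A parallel computation using $d\log p = d\log(\bar{\calL}^{-1}p) + d\log\bar{\calL}$ and $\log(\bar{\calL}^{-1}p) = \log\bar{u}_0 + \sum_{n\ge 1}\bar{\alpha}_n p^n$ yields
\beqnn
  (\bar{\calL}^{-k})_0 = \sum_{n=1}^{k} n\bar{\alpha}_n(\bar{\calL}^{-k})_{-n}.
\eeqnn

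With these two identities, I would substitute into the right-hand side of (\ref{log-str-eq(0)}):
\beqnn
  \beta\sum_{k=1}^\infty kt_k(\calL^k)_0
  = \sum_{n=1}^\infty n\alpha_n\Bigl(\beta\sum_{k=n}^\infty kt_k(\calL^k)_n\Bigr)
  = \sum_{n=1}^\infty n\alpha_n\bar{\alpha}_n
\eeqnn
by interchanging summation order and invoking (\ref{log-str-eq(n)}). This turns (\ref{log-str-eq(0)}) into exactly (\ref{log-str-eq(0reduced)}). An identical manipulation for (\ref{log-str-eq(0bar)}), together with the sign in (\ref{log-str-eq(nbar)}), gives
\beqnn
  -\beta\sum_{k=1}^\infty k\bar{t}_k(\bar{\calL}^{-k})_0
  = -\sum_{n=1}^\infty n\bar{\alpha}_n\Bigl(-\alpha_n\Bigr)
  = \sum_{n=1}^\infty n\alpha_n\bar{\alpha}_n,
\eeqnn
so (\ref{log-str-eq(0bar)}) also reduces to (\ref{log-str-eq(0reduced)}).

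The only real subtlety I anticipate is justifying the interchange of the double sums and verifying convergence of the formal expressions—this should be straightforward since the inner sums are finite ($1\le n\le k$) and the outer sums are formal power series in $\bst$ and $\bar{\bst}$, but I would spell out the triangular dependence structure carefully. The residue identity $\res(\calL^k d\log\calL)=0$ is the core algebraic input; everything else is bookkeeping driven by the two logarithmic expansions.
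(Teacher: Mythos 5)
Your proposal is correct and follows essentially the same route as the paper's proof: the key identity $(\calL^k)_0=\sum_{n=1}^k n\alpha_n(\calL^k)_n$ (and its barred analogue) is obtained from the vanishing residue $\res(\calL^k\,d\log\calL)=0$ exactly as in the paper, and the remaining step is the same interchange of summation combined with (\ref{log-str-eq(n)}) and (\ref{log-str-eq(nbar)}). The signs in your treatment of (\ref{log-str-eq(0bar)}) check out, so both equations collapse to (\ref{log-str-eq(0reduced)}) as claimed.
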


\begin{proof}
Let us use the formal residue notation to express 
the terms $(\calL^k)_0$ in (\ref{log-str-eq(0bar)}) as 
\beqnn
  (\calL^k)_0 = \res(\calL^kd\log p).
\eeqnn
Since the identity 
\beqnn
  0 = \res(\calL^kd\log\calL) 
    = \res\left(\calL^kp\frac{\rd\log\calL}{\rd p}d\log p\right) 
\eeqnn
holds for $k \ge 1$, this expression of $(\calL^k)_0$ 
can be further rewritten as 
\beqnn
\begin{aligned}
  (\calL^k)_0 
  &= \res\left(\calL^k(1 - p\frac{\rd\log\calL}{\rd p})d\log p\right)\\
  &= - \res\left(\calL^k\frac{\rd\log(\calL p^{-1})}{\rd p}dp\right). 
\end{aligned}
\eeqnn
By substituting $\calL p^{-1} = \alpha_1p^{-1} + \alpha_2 p^{-2} + \cdots$, 
the right hand side can be expanded as 
\beqnn
\begin{aligned}
  \mbox{RHS} 
  &= \alpha_1\res(\calL^k p^{-2}dp) 
     + 2\alpha_2\res(\calL^k p^{-3}dp) + \cdots \\
  &= \alpha_1(\calL^k)_1 + 2\alpha_2(\calL^k)_2 + \cdots. 
\end{aligned}
\eeqnn
Since $(\calL^k)_n = 0$ for $n > k$, this expansion terminates 
at the $k$-th term.  One can thus obtain the identity 
\beqnn
  (\calL^k)_0 
  = \alpha_1(\calL^k)_1 + 2\alpha_2(\calL^k)_2 
     + \cdots + k\alpha_k(\calL^k)_k. 
\eeqnn
In the same way, one can derive the identity 
\beqnn
  (\bar{\calL}^{-k})_0 
  = \bar{\alpha}_1(\bar{\calL}^{-k})_{-1} 
    + 2\bar{\alpha}_2(\bar{\calL}^{-k})_{-2} 
    + \cdots + k\bar{\alpha}_k(\bar{\calL}^{-k})_{-k}. 
\eeqnn
By virtue of these identities, one can rewrite the two sums 
in (\ref{log-str-eq(0bar)}) and (\ref{log-str-eq(0)}) as 
\beqnn
\begin{aligned}
  \sum_{k=1}^\infty kt_k(\calL^k)_0 
  &= \sum_{k=1}^\infty kt_k\left( 
      \alpha_1(\calL^k)_1 + 2\alpha_2(\calL^k)_2 
      + \cdots + k\alpha_k(\calL^k)_k \right) \nonumber\\
  &= \alpha_1\sum_{k=1}^\infty kt_k(\calL^k)_1 
     + 2\alpha_2\sum_{k=1}^\infty kt_k(\calL^k)_2 + \cdots \nonumber\\
  &= \beta^{-1}\sum_{n=1}^\infty n\alpha_n\bar{\alpha}_n 
\end{aligned}
\eeqnn
and 
\beqnn
\begin{aligned}
  \sum_{k=1}^\infty k\bar{t}_k(\bar{\calL}^{-k})_0 
  &= \sum_{k=1}^\infty k\bar{t}_k\left( 
      \bar{\alpha}_1(\bar{\calL}^{-k})_1 
      + 2\bar{\alpha}_2(\bar{\calL}^{-k})_2 
      + \cdots + k\bar{\alpha}_k(\bar{\calL}^{-k})_k \right) \nonumber\\
  &= \bar{\alpha}_1\sum_{k=1}^\infty k\bar{t}_k(\bar{\calL}^{-k})_1 
     + 2\bar{\alpha}_2\sum_{k=1}^\infty k\bar{t}_k(\bar{\calL}^{-k})_2 
     + \cdots \nonumber\\
  &= - \beta^{-1}\sum_{n=1}^\infty n\bar{\alpha}_n\alpha_n. 
\end{aligned}
\eeqnn
Note that (\ref{log-str-eq(n)}) and (\ref{log-str-eq(nbar)}) 
have been used to derive the last lines. 
Thus (\ref{log-str-eq(0bar)}) and (\ref{log-str-eq(0)}) 
turn out to reduce to the same equation (\ref{log-str-eq(0reduced)}). 
\end{proof}

We can thus use (\ref{log-str-eq(0reduced)}) 
in place of (\ref{log-str-eq(0bar)}) and 
(\ref{log-str-eq(0)}).  Adding this equation 
to (\ref{log-str-eq(nbar)}) and (\ref{log-str-eq(n)}), 
we obtain a full system of equations that determine 
the power series expansion of $u_n$, $\bar{u}_n$ and 
$\bar{u}_0$ recursively.   We can readily see 
from (\ref{log-str-eq(0reduced)}) that $\bar{u}_0$ 
is a power series of the form 
\beq
  \log\bar{u}_0 = \log Q + \beta s 
    + \mbox{(terms of positive orders in $\bst,\bar{\bst}$)}. 
\label{u_0-expansion}
\eeq
On the other hand, since 
\beqnn
\begin{gathered}
  (\calL^n)_n = 1,\quad
  (\bar{\calL}^{-n})_{-n} = \bar{u}_0^n,\\
  (\calL^k)_n = \mbox{(polynomial in $u_1,\cdots,u_{k-n}$)}
    \quad \mbox{for $k>n$},\\
  (\bar{\calL}^{-k})_{-n} = \bar{u}_0^k \times\mbox{(polynomial in 
       $\bar{u}_0^{-1}\bar{u}_1,\cdots,\bar{u}_0^{-1}\bar{u}_{k-n}$)}
    \quad \mbox{for $k>n$}, 
\end{gathered}
\eeqnn
$u_n$ and $\bar{u}_n$ are power series of the form 
\beq
\begin{gathered}
  u_n = - \beta n\bar{t}_n\bar{u}_0^n 
         + \mbox{(terms of higher orders in $\bst,\bar{\bst}$)},\\
  \bar{u}_n = \beta nt_n\bar{u}_0  
         + \mbox{(terms of higher orders in $\bst,\bar{\bst}$)}. 
\end{gathered}
\label{uubar-expansion}
\eeq

This power series solution of (\ref{cl-str-eq}) 
(which is unique by construction) is homogeneous 
just like the solution of the generalized string equations 
(\ref{c=1-str-eq}) for $c = 1$ string theory \cite{Takasaki95}.  
This is a consequence of invariance of the the string equations 
under the scaling transformations 
\beq
\begin{gathered}
  t_n \to c^{-n}t_n, \quad \bar{t}_n \to c^n\bar{t}_n, \quad
  s \to s, \quad p \to cp, \\
  u_n \to c^nu_n, \quad \bar{u}_n \to c^{-n}\bar{u}_n, \quad 
  v_n \to c^nv_n, \quad \bar{v}_n \to c^{-n}\bar{v}_n. 
\end{gathered}
\label{solution-homogeneity}
\eeq

In summary, we have observed the following: 

\begin{theorem}
The generalized string equations (\ref{cl-str-eq}) 
have a unique solution that has power series expansion 
with respect to $(\bst,\bar{\bst})$ as shown 
in (\ref{u_0-expansion}) and (\ref{uubar-expansion}). 
This solution is homogeneous with respect to 
the scaling transformation (\ref{solution-homogeneity}). 
\end{theorem}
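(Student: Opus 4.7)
The plan is to construct the unique solution recursively by the total degree in $(\bst,\bar{\bst})$, using the system (\ref{log-str-eq(nbar)}), (\ref{log-str-eq(n)}), (\ref{log-str-eq(0reduced)}) for $\log\bar{u}_0$ and $u_n,\bar{u}_n$ with $n \ge 1$ (the coefficients $v_n,\bar{v}_n$ being then recovered from the explicit formulas (\ref{log-str-eq(vbar_n)}) and (\ref{log-str-eq(v_n)})), and to deduce the homogeneity statement from scale invariance of the equations together with uniqueness.

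Concretely, I would assign degree $1$ to each $t_k$ and $\bar{t}_k$ and work in formal power series in $(\bst,\bar{\bst})$ whose coefficients are functions of $(s,\beta,Q)$; write $U^{(d)}$ for the degree-$d$ homogeneous component of a series $U$. The base case is immediate: the sum $\sum k\alpha_k\bar{\alpha}_k$ in (\ref{log-str-eq(0reduced)}) has degree $\ge 2$ since each $\alpha_n,\bar{\alpha}_n$ is of degree $\ge 1$, so $(\log\bar{u}_0)^{(0)} = \log Q + \beta s$; and the right-hand sides of (\ref{log-str-eq(nbar)}) and (\ref{log-str-eq(n)}) are of degree $\ge 1$, so $u_n^{(0)} = \bar{u}_n^{(0)} = 0$ for $n \ge 1$. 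For the inductive step, assume all unknowns have been determined through degree $d - 1$. The key structural fact is that on the right-hand side of each of the three equations, every summand carries either an explicit $\bar{t}_k$ or $t_k$ factor of degree $1$, or (in the $\alpha_k\bar{\alpha}_k$ terms) is a product of two factors of degree $\ge 1$; meanwhile $(\calL^k)_n$ and $(\bar{\calL}^{-k})_{-n}$ are polynomials in $u_1,\dots,u_{k-n}$, respectively in $\bar{u}_0,\bar{u}_0^{-1}\bar{u}_1,\dots,\bar{u}_0^{-1}\bar{u}_{k-n}$, so their degree-$(d - 1)$ parts require only degree-$\le d-1$ data of the unknowns. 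On the left-hand sides, the relation $\alpha_n = u_n + P_n(u_1,\dots,u_{n-1})$ with $P_n$ of polynomial degree $\ge 2$ in arguments each of degree $\ge 1$ isolates $u_n^{(d)}$ in terms of degrees $\le d - 1$, and analogously for $\bar{u}_n^{(d)}$ and $(\log\bar{u}_0)^{(d)}$. This pins down every unknown at degree $d$ uniquely. The leading terms (\ref{u_0-expansion}) and (\ref{uubar-expansion}) are then read off the degree-$1$ step, using $(\bar{\calL}^{-n})_{-n}^{(0)} = (Qe^{\beta s})^n$ and $(\calL^n)_n^{(0)} = 1$.

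Homogeneity with respect to (\ref{solution-homogeneity}) then follows from uniqueness. A direct computation shows that under the scaling $\calL \mapsto c\calL$ and $\bar{\calL} \mapsto c\bar{\calL}$, while $\calM$ and $\bar{\calM}$ are invariant (since $nt_n\calL^n \mapsto nc^{-n}t_n(c\calL)^n = nt_n\calL^n$, and likewise for $n\bar{t}_n\bar{\calL}^{-n}$, $v_n\calL^{-n}$, $\bar{v}_n\bar{\calL}^n$); hence (\ref{cl-str-eq}) is preserved. The scaled solution is therefore again a power series solution with the same leading behaviour, and by uniqueness it must coincide with the original, which is precisely the homogeneity assertion. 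The main obstacle is not conceptual but a matter of careful bookkeeping in the inductive step: one needs to verify that the degree-$0$ constant $Qe^{\beta s}$ of $\bar{u}_0$ enters the right-hand sides only inside combinations that are multiplied by a positive-degree factor (either an explicit $\bar{t}_k,t_k$ or a $\bar{u}_0^{-1}\bar{u}_j$ with $j \ge 1$), so that degree-$d$ components are never required to compute the degree-$d$ right-hand side. This is immediate from the polynomial expansions of $\calL^k$ and $\bar{\calL}^{-k}$, but is the one step that must be checked explicitly for the recursion to close.
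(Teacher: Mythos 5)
Your proposal is correct and follows essentially the same route as the paper: the recursive determination of $\log\bar{u}_0$, $u_n$, $\bar{u}_n$ from (\ref{log-str-eq(0reduced)}), (\ref{log-str-eq(nbar)}), (\ref{log-str-eq(n)}) (with $v_n,\bar{v}_n$ read off afterwards from (\ref{log-str-eq(v_n)}) and (\ref{log-str-eq(vbar_n)})), organized as an induction on total degree in $(\bst,\bar{\bst})$, plus homogeneity deduced from invariance of the equations under (\ref{solution-homogeneity}) combined with uniqueness. Your explicit bookkeeping of which degree components enter each right-hand side is just a more formal version of the paper's observation that the unknowns appear linearly on the left while every term on the right carries an explicit $t_k$ or $\bar{t}_k$ factor.
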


\subsection{Solutions at special values of $\bst,\bar{\bst}$}

The foregoing construction of solution simplifies to some extent 
when $\bst$ and $\bar{\bst}$ take special values.  
Of particular interest are the following two cases: 
\begin{itemize}
\item[(i)] $t_k$'s are free, and $\bar{t}_k$'s are restricted 
to $\bar{t}_k = \bar{t}_1\delta_{k1}$, 
\item[(ii)] $\bar{t}_k$'s are restricted to 
$\bar{t}_k = \bar{t}_1\delta_{k1}$, and $t_k$'s are free. 
\end{itemize}
They amount to restricting the generating function 
$Z[\bst,\bar{\bst}]$ of double Hurwitz numbers 
to generating functions of simple Hurwitz numbers.  
Since these two cases are essentially equivalent, 
let us consider (i) only.   

In the case of (i), (\ref{log-str-eq(nbar)}) implies 
that $\alpha_n$ vanishes for $n > 1$ and 
that the only non-vanishing component is given by 
\beqnn
  \alpha_1 = u_1 = - \beta\bar{t}_1\bar{u}_0. 
\eeqnn
Thus $\calL$ simplifies as 
\beq
  \calL = pe^{\alpha_1p^{-1}} = pe^{-\beta\bar{t}_1\bar{u}_0p^{-1}}, 
\label{L-case(i)}
\eeq
and $(\calL^k)_n$ can be written explicitly as 
\beq
  (\calL^k)_n = \frac{(ku_1)^{k-n}}{(k-n)!} 
    = \frac{(-k\beta\bar{t}_1\bar{u}_0)^{k-n}}{(k-n)!}. 
\label{(L^k)_n-case(i)}
\eeq
$\bar{u}_n$, $n = 1,2,\cdots$, are thereby 
recursively determined by (\ref{log-str-eq(n)}) 
as a function of $t_k$'s and $\bar{u}_0$.  
$\bar{u}_0$ is determined by (\ref{log-str-eq(0)}), 
which now takes an explicit form as 
\beq
  \log\bar{u}_0 
  = \log Q + \beta s 
    + \beta\sum_{k=1}^\infty 
         kt_k\frac{(-k\beta\bar{t}_1\bar{u}_0)^k}{k!}. 
\label{u_0-case(i)}
\eeq
$v_n$ and $\bar{v}_n$'s, too, have more or less 
explicit formulae, though we omit details. 

This result shows a remarkable feature.  
Namely, (\ref{L-case(i)}) resembles the defining equation 
$x = ye^y$ of Lambert's W-function $y = W(x)$ 
if $\calL^{-1}$ and $p^{-1}$ are identified with $x$ and $y$.   
The so called Lambert curve is defined by this equation 
on the $(x,y)$-plane, and plays a fundamental role 
in the recent studies on Hurwitz numbers 
\cite{BM09,BEMS09,EMS09,MS09a,MS09b}.  

This analogy becomes more precise when $t_k$'s, 
too, are specialized to $t_k = 0$, $k = 1,2,\cdots$.  
In that case, $\bar{u}_0$ is explicitly determined as 
\beqnn
  \bar{u}_0 = Qe^{\beta s}. 
\eeqnn
Moreover, (\ref{log-str-eq(n)}) implies that $\bar{\alpha}_n$ 
vanishes for all $n$, hence 
\beq
  \bar{\calL}^{-1} = \bar{u}_0p^{-1}. 
\eeq
(\ref{L-case(i)}) thereby turns into the equation 
\beq
  \calL = \bar{u}_0\bar{\calL}e^{-\beta\bar{t}_1\bar{\calL}^{-1}} 
\label{LLbar-eq-case(i)}
\eeq
for $\calL$ and $\bar{\calL}$.  In view of the remarks 
in the beginning of this section, it seems likely 
that this equation can be identified with 
the spectral curve for simple Hurwitz numbers.

\subsection*{Acknowledgements}

This work is partly supported by JSPS Grants-in-Aid 
for Scientific Research No. 19104002, No. 21540218 
and No. 22540186 from the Japan Society 
for the Promotion of Science.


\begin{thebibliography}{99}

\bibitem{Hurwitz1891}
A.~Hurwitz, 
\"{U}ber Riemann'sche Fl\"{a}chen mit gegebenen Verzweigungspunkten, 
Math. Ann. {\bf 39} (1891), 1--60. 

\bibitem{FP99}
B.~Fantechi and R.~Pandharipande, 
Stable maps and branched divisors, 
Composition Math. {\bf 130} (2002), 345--364. 

\bibitem{GJV99}
I.P.~Goulden, D.M.~Jackson and R.~Vakil, 
The Gromov-Witten potential of a point, 
Hurwitz numbers, and Hodge integrals, 
Proc. London Math. Soc. III. Ser. {\bf 83} (2001), 563--581.

\bibitem{ELSV00} 
T.~Ekedahl, S.~Lando, M.~Shapiro, A.~Vainshtein, 
Hurwitz numbers and intersections on moduli spaces of curves, 
Invent. Math. {\bf 146} (2001), 297--327. 

\bibitem{Pandharipande00}
R.~Pandharipande, 
The Toda equations and the Gromov-Witten theory 
of the Riemann sphere, 
Lett. Math. Phys. {\bf 53} (2000), 59--74. 

\bibitem{Okounkov00}
A.~Okounkov, 
Toda equations for Hurwitz numbers, 
Math. Res. Lett. {\bf 7} (2000), 447--453. 

\bibitem{OP01}
A.~Okounkov and R.~Pandharipande, 
Gromov-Witten theory, Hurwitz numbers and matrix models I, 
Proc. Symposia Pure Math. {\bf 80}, pp. 325--414 
(American Mathematical Society, 2009). 

\bibitem{OP02}
A.~Okounkov and R.~Pandharipande, 
Gromov-Witten theory, Hurwitz theory, and completed cycles, 
Ann. Math. {\bf 163} (2006), 517--560. 

\bibitem{Witten91}
E.~Witten, 
Two dimensional gravity and intersection theory 
on moduli space, 
Survey Diff. Geom. {\bf 1} (1991), 243--310. 

\bibitem{Kontsevich92}
M.~Kontsevich, 
Intersection theory on the moduli space of curves 
and the matrix Airy function, 
Comm. Math. Phys. {\bf 147} (1992), 1--23. 

\bibitem{KL05}
Y.-S.~Kim and K.~Liu, 
A simple proof of Witten conjecture through localization, 
arXiv:math.AG/0508384. 

\bibitem{KL06}
M.~E.~Kazarian and S.~K.~Lando, 
An algebro-geometric proof of Witten's conjecture, 
J. Amer. Math. Soc. {\bf 20} (2007), 1079--1089. 

\bibitem{CLL06}
L.~Chen, Y.~Li and K.~Liu, 
Localization, Hurwitz numbers and the Witten conjecture, 
Asian J. Math. {\bf 12} (2008), 511-518. 

\bibitem{BM09}
V.~Bouchard and M.~Mari\~{n}o, 
Hurwitz numbers, matrix models and enumerative geometry, 
Proc. Symposia Pure Math. vol. {\bf 78}, pp. 263--283  
(American Mathematical Society 2008). 

\bibitem{BEMS09}
G.~Borot, B.~Eynard, M.~Mulase and B.~Safnuk, 
Hurwitz numbers, matrix models and topological recursion, 
arXiv:0906.1206 [math-ph].

\bibitem{EMS09}
B.~Eynard, M.~Mulase and B.~Safnuk, 
The Laplace transform of the cut-and-join equation 
and the Bouchard-Mari\~{n}o conjecture on Hurwitz numbers, 
arXiv:0907.5224 [math.AG]. 

\bibitem{EO07}
B.~Eynard and N.~Orantin, 
Invariants of algebraic curves and topological expansion,
Commun. Number Theory Phys. {\bf 1} (2007), 347--452. 

\bibitem{DJKM83}
E.~Date, M.~Jimbo, M.~Kashiwara and T.~Miwa, 
Transformation groups for soliton equations, 
Proceedings of RIMS workshop 
{\it Non-inear Integrable Systems --- 
Classical Theory and Quantum Theory\/}, pp. 39--119.
(World Scientific, Singapore, 1983). 

\bibitem{MJD-book}
T.~Miwa, M.~Jimbo and E.~Date, 
Solitons.  Differential equations, symmetries and 
infinite-dimensional algebras, 
Cambridge Tracts in Mathematics vol. {\bf 135}
(Cambridge University Press, Cambridge, 2000). 

\bibitem{TT95}
K.~Takasaki and T.~Takebe, 
Integrable hierarchies and dispersionless limit, 
Rev. Math. Phys. {\bf 7} (1995), 743--808.  

\bibitem{GJ08}
I.~P.~Goulden and D.~M.~Jackson, 
The KP hierarchy, branched covers, and triangulations, 
Adv. Math. {\bf 219} (2008), 932--951. 

\bibitem{MM08}
A.~Mironov and A.~Morozov, 
Virasoro constraints for Kontsevich-Hurwitz partition function, 
JHEP {\bf 0902} (2009), 024. 

\bibitem{Kazarian08}
M.~Kazarian, 
KP hierarchy for Hodge integrals, 
Adv. Math. {\bf 221} (2009), 1--21. 

\bibitem{UT84}
K.~Ueno and K.~Takasaki, 
Toda lattice hierarchy, 
Adv. Stud. Pure Math. vol. {\bf 4}, pp. 1--95 
(Kinokuniya, Tokyo, 1984). 

\bibitem{AvM92}
M.~Adler and P.~van~Moerbeke, 
A matrix integral solution to two-dimensional $W_p$-gravity, 
Commun. Math. Phys. {\bf 147} (1992), 25--56. 

\bibitem{DMP93}
R.~Dijkgraaf, G.~Moore and R.~Plesser, 
The partition function of 2D string theory, 
Nucl. Phys. {\bf B394} (1993), 356--382.  

\bibitem{HOP94}
A.~Hanany, Y.~Oz and R.~Plesser, 
Topological Landau-Ginzburg formulation 
and integrable structure of 2d string theory, 
Nucl. Phys. {\bf B425} (1994), 150--172. 

\bibitem{EK94}
T.~Eguchi and H.~Kanno, 
Toda lattice hierarchy and the topological description 
of $c = 1$ string theory, 
Phys. Lett. {\bf B331} (1994), 330--334.  

\bibitem{Takasaki95}
K.~Takasaki, 
Dispersionless Toda hierarchy and two-dimensional string theory, 
Comm. Math. Phys. {\bf 170} (1995), 101--116.  

\bibitem{NTT95}
T.~Nakatsu, K.~Takasaki and S.~Tsujimaru, 
Quantum and classical aspects of deformed $c = 1$ strings, 
Nucl. Phys. {\bf B443} (1995), 155--197.  

\bibitem{Takasaki96}
K.~Takasaki, 
Toda lattice hierarchy and generalized string equations, 
Comm. Math. Phys. {\bf 181} (1996), 131--156.  

\bibitem{NT07}
T.~Nakatsu and K.~Takasaki, 
Melting crystal, quantum torus and Toda hierarchy, 
Comm. Math. Phys. {\bf 285} (2009), 445--468. 

\bibitem{NT08}
T.~Nakatsu and K.~Takasaki, 
Integrable structure of melting crystal model 
with external potentials, 
Adv. Stud. Pure Math. vol. {\bf 59}, pp. 201--223 
(Mathematical Society of Japan, 2010). 

\bibitem{GJ97}
I.~P.~Goulden and D.~M.~Jackson, 
Transitive factorisations into transpositions 
and holomorphic mappings on the sphere, 
Proc. Amer. Math. Soc. {\bf 125} (1997), 51--60. 

\bibitem{MS09a}
A.~Morozov and Sh.~Shakirov, 
Generation of Matrix Models by W-operators, 
JHEP 0904 (2009), 064.

\bibitem{MS09b}
A.~Morozov and Sh.~Shakirov, 
On Equivalence of two Hurwitz Matrix Models, 
Mod. Phys. Lett. {\bf A24} (2009), 2659--2666. 

\bibitem{Macdonald-book}
I.~Macdonald,
Symmetric Functions and Hall Polynomials 
(Oxford University Press, USA, 1999).  

\bibitem{Burnside-book}
W.~Burnside, 
Theory of groups of finite order, 2nd edition 
(Cambridge University Press 1911). 

\bibitem{Zinn-Justin09}
P.~Zinn-Justin, 
Six-vertex model, loop and tiling models: 
Integrability and combinatorics, 
arXiv:0901.0665 [math-ph]. 

\bibitem{KMMM93}
S.~Kharchev, A.~Marshakov, A.~Mironov and A.~Morozov, 
Generalized Kazakov-Migdal-Kontsevich Model: group theory aspects, 
Int. J. Mod. Phys. {\bf A10} (1995), 2015. 

\bibitem{Zhou03}
J.~Zhou, 
Hodge integrals, Hurwitz numbers, and symmetric groups, 
arXiv:math.AG/0308024. 

\bibitem{Takebe91}
T.~Takebe, 
Representation theoretical meanings of 
the initial value problem for 
the Toda lattice hierarchy I, 
Lett. Math. Phys. {\bf 21} (1991), 77--84; 
ditto II, Publ. RIMS, Kyoto Univ., {\bf 27} (1991), 491--503. 

\bibitem{TT91}
K.~Takasaki and T.~Takebe, 
SDiff(2) Toda equation --- hierarchy, tau function 
and symmetries, 
Lett. Math. Phys. {\bf 23} (1991), 205--214. 

\bibitem{TT93}
K.~Takasaki and T.~Takebe, 
Quasi-classical limit of Toda hierarchy 
and W-infinity symmetries, 
Lett. Math. Phys. {\bf 28} (1993), 165--176. 

\bibitem{Dubrovin96}
B.~Dubrovin, 
Geometry of 2D topological field theories, 
Lect. Notes Math. vol. {\bf 1620}, pp. 120--348 
(Springer Verlag, 1996). 

\bibitem{Krichever94}
I.~Krichever, 
The $\tau$-function of the universal Whitham hierarchy, 
matrix models and topological field theories, 
Comm. Pure Appl. Math. {\bf 47} (1994), 437--475.  

\bibitem{WZ00}
P.~B. Wiegmann and A. Zabrodin,
Conformal maps and integrable hierarchies,
Comm. Math. Phys. {\bf 213} (2000), 523--538. 

\bibitem{MWWZ00}
M. Mineev-Weinstein, P.~B. Wiegmann and A. Zabrodin, 
Integrable structure of interface dynamics, 
Phys. Rev. Lett. {\bf 84} (2000), 5106--5109. 

\bibitem{Zabrodin01}
A. Zabrodin,
Dispersionless limit of Hirota equations
in some problems of complex analysis,
Theor. Math. Phys. {\bf 12} (2001), 1511--1525. 

\bibitem{Douglas90}
M.~Douglas, 
Strings in less than one-dimension 
and the generalized KdV hierarchies, 
Phys. Lett. {\bf B238} (1990), 176--180. 

\bibitem{Schwarz91}
A.~Schwarz, 
On solutions to the string equations, 
Mod. Phys. Lett. {\bf A6} (1991), 2713--2726. 

\bibitem{KS91}
V.~Kac and A.~Schwarz, 
Geometric interpretation of partition function of 2D gravity, 
Phys. Lett. {\bf B257} (1991), 329--334. 

\bibitem{AKK03}
S.~Yu.~Alexandrov, V.~A.~Kazakov and I.~K.~Kostov, 
2D string theory as normal matrix model, 
Nucl. Phys. {\bf B667} (2003), 90--110. 

\bibitem{TBAZW04a}
R.~Teodorescu, E.~Bettelheim, O.~Agam, A.~Zabrodin and P.~Wiegmann, 
Normal random matrix ensemble as a growth problem, 
Nucl. Phys. {\bf B704} (2005), 407--444. 

\bibitem{TBAZW04b}
R.~Teodorescu, E.~Bettelheim, O.~Agam, A.~Zabrodin and P.~Wiegmann, 
Semiclassical evolution of the spectral curve 
in the normal random matrix ensemble as Whitham hierarchy, 
Nucl. Phys. {\bf B700} (2004), 521--532. 

\end{thebibliography}
\end{document}